\titlespacing{\section}{2pt}{*2}{*1}
\newtheorem{theorem}{Theorem}[section]
\newtheorem{lemma}[theorem]{Lemma}
\newtheorem{corollary}[theorem]{Corollary}
\newtheorem{definition}[theorem]{Definition}
\DeclareMathOperator*{\argmin}{argmin}
\DeclareMathOperator*{\argmax}{argmax}
\newcommand{\RR}{\mathbb R}
\newcommand{\ValueIteration}{\mbox{\sc Value-Iteration}}
\newcommand{\PolicyIteration}{\mbox{\sc Policy-Iteration}}
\newcommand{\StrategyIteration}{\mbox{\sc Strategy-Iteration}}
\newcommand{\ee}{{\bf e}}
\newcommand{\xx}{{\bf x}}
\newcommand{\vv}{{\bf v}}
\newcommand{\uu}{{\bf u}}
\newcommand{\cc}{{\bf c}}
\newcommand{\s}{{\bf s}}
\newcommand{\zz}{{\bf 0}}
\newcommand{\TT}{{\mathcal T}}
\newcommand{\PP}{{\mathcal P}}
\newcommand{\Pol}{\Pi}
\newcommand{\etal}{{\em et al.\ }}
\title{Strategy iteration is strongly polynomial for 2-player\\ turn-based stochastic games with a constant discount factor}
\author{\em Thomas Dueholm Hansen \thanks{Department of Computer Science,
Aarhus University, Denmark. E-mail:
{\tt \{tdh,bromille\}@cs.au.dk}. Supported by the Center for
Algorithmic Game Theory, funded by the Carlsberg Foundation. } \and \em Peter Bro Miltersen ${}^*$
\and \em Uri Zwick \thanks{School of Computer Science,
Tel Aviv University, Tel Aviv 69978, Israel. E-mail: {\tt
zwick@tau.ac.il}. Supported by grant 1306/08 of the Israeli Science Foundation.}}
\date{}
\begin{document}
\maketitle
\begin{abstract}\noindent
\baselineskip=14pt
Ye showed recently that the \emph{simplex method} with Dantzig pivoting rule, as well as Howard's \emph{policy iteration} algorithm, solve discounted
\emph{Markov decision processes} (MDPs), with a constant discount factor, in \emph{strongly polynomial} time.
More precisely, Ye showed that both algorithms terminate after at most
$O\bigl(\frac{mn}{1-\gamma}\log\bigl(\frac{n}{1-\gamma}\bigr)\bigr)$ iterations, where $n$ is the number of states, $m$ is the total number of actions in the MDP, and $0<\gamma<1$ is the discount factor.
We improve Ye's analysis in two respects. First, we improve the bound given by Ye and show that Howard's
policy iteration algorithm actually terminates after at most $O\bigl(\frac{m}{1-\gamma}\log\bigl(\frac{n}{1-\gamma}\bigr)\bigr)$ iterations.
% improving over Ye's result by a factor of~$n$.
Second, and more importantly, we show that the same bound applies to the number of iterations performed by the \emph{strategy iteration} (or \emph{strategy improvement}) algorithm, a generalization of Howard's policy iteration algorithm used for solving 2-player turn-based \emph{stochastic games} with discounted zero-sum rewards. This provides the first strongly polynomial algorithm for solving these games, resolving a long standing open problem.
\end{abstract}

\thispagestyle{empty}
\newpage
\setcounter{page}{1}

\section{Introduction}

\emph{Markov Decision Processes} (MDPs) are widely used in operations research, machine learning and related disciplines, to model long-term sequential decision making in uncertain, i.e., stochastic, environments. \emph{Stochastic Games} (SGs), a generalization of MDPs to a 2-player setting, are widely used to model long-term sequential decision making in stochastic and adversarial environments.
% Both models emerged in the 1950s.
MDPs were first introduced by Bellman \cite{Bellman57}.
SGs, which form a more general model, were introduced slightly earlier by Shapley \cite{Shapley53}. Many variants of MDPs and SGs were studied in the literature. The MDPs and SGs considered in this paper are infinite-horizon \emph{discounted} MDPs/SGs. The SGs we consider are \emph{turn-based} and we thus refer to them as \emph{2-player Turn-Based Stochastic Games} (2TBSG).

MDPs may be viewed as degenerate 2TBSGs in which one of the players has no influence on the game.
For a thorough treatment of MDPs and their numerous practical applications, see the books of Howard \cite{Howard60}, Derman \cite{Derman72}, Puterman \cite{Puterman94} and Bertsekas \cite{Bertsekas01}. For a similar treatment of SGs,
see the books of Filar and Vrieze \cite{DutchBook} and Neyman and Sorin
\cite{NatoBook}.

% MDPs have numerous of practical applications.
% SGs were first brought to the attention of the computer science community by Condon's
% \cite{Condon92,Condon93} work on \emph{simple} stochastic games.

A 2TBSGs is composed of a finite set of \emph{states} and a finite set of \emph{actions}. Each state is controlled by one of the players. % In MDPs, all states are controlled by the single player.
In each time unit, the game is in exactly one of the states. Each state has a non-empty set of actions associated with it. The player controlling the state must play one of these actions. Playing an action incurs an immediate \emph{cost}, and results in a probabilistic transition to a new state according to a probability distribution that depends on the action. The process goes on indefinitely. The first player tries to \emph{minimize} the total expected \emph{discounted} cost of the infinite sequence of actions taken, with respect to a fixed \emph{discount factor}. The second player tries to \emph{maximize} this total discounted cost. Discounting captures the fact that a cost incurred at a later stage has a smaller effect than the same cost incurred at an earlier stage. For formal definitions, see Section~\ref{S-2TBSG}.

% An alternative t

A \emph{policy} or a \emph{strategy} for a player is a possibly probabilistic rule that specifies the action to be taken in each situation, given the full history of play so far. One of the fundamental results in the theory of MDPs and 2TBSGs, is that both players have \emph{positional} optimal strategies. A positional strategy is a strategy that is both \emph{deterministic} and \emph{memoryless}. A \emph{memoryless} strategy is a strategy that depends only on the current state, and not on the full history. MDPs and 2TBSGs are \emph{solved} by finding optimal positional strategies for the players.

% A fundamental model of operations research is the finite, but infinite-horizon,
% discounted \emph{Markov Decision Process} (MDP). In an MDP, in each of
% a finite set of situations or states, a single agent must choose one action
% from a finite set of possible actions. Depending on the state and the
% chosen action,
% a probabilistic transition to another state is made. Also, the agent
% experiences a {\em reward} depending on the state and the
% action. {\em Solving}
% an MDP means finding a single action for each state (collectively
% called a {\em policy}) that maximizes
% the expected total reward experienced by the agent, when future
% rewards are discounted geometrically according to a
% discount factor. For formal definitions, see Section \ref{S-2TBSG}.
% Solving MDPs has a vast number of practical applications,
% see, e.g., the monograph of Puterman \cite{Puterman94}.

% It has long been known that
MDPs can be solved using linear programming (d'Epenoux \cite{dEpenoux63}, Derman \cite{Derman72}).
The preferred way of solving MDPs in practice, however, is Howard's \cite{Howard60} \emph{Policy
Iteration} algorithm. The policy iteration algorithm maintains and
iteratively improves a policy by performing ``obvious'' improving switches
(for details, see Section \ref{sec-strat}). Howard's algorithm may be viewed as a parallel version of the simplex algorithm in which several pivoting steps are performed simultaneously. The problem of determining the worst case complexity of Howard's algorithm was stated explicitly at least 25 years ago. (It is mentioned, among other places, in
Schmitz \cite{Schmitz85}, Littman \etal \cite{LiDeKa95} and Mansour and Singh \cite{MaSi99}.)
Meister and Holzbaur
\cite{Meister86} established, decades ago, that the number of iterations performed by Howard's algorithm, when the discount factor is fixed, is polynomially bounded in the \emph{bit size} of the input. Their bound, however, is not polynomial in the number of states and actions of the MDP. The first \emph{strongly polynomial} time algorithm for solving MDPs was an interior point algorithm of Ye \cite{Ye05}.

Very recently, Ye \cite{Ye10} presented
a surprisingly simple proof that Howard's algorithm terminates
after at most
$O\bigl(\frac{mn}{1-\gamma}\log\bigl(\frac{n}{1-\gamma}\bigr)\bigr)$
iterations, where $n$ is the number of states, $m$ is the total number
of actions, and $0<\gamma<1$ is the discount factor.
In particular, when the discount factor is constant, the number of
iterations
is $O(m n \log n)$. Since each iteration only involves solving a system of linear equations, Ye's result established for the first time that Howard's
algorithm is a \emph{strongly polynomial} time algorithm, when the
discount factor is constant. Ye's proof is based on a careful
analysis of an LP formulation of the MDP problem,
with LP duality and complementary slackness playing crucial roles.

We significantly improve and extend Ye's \cite{Ye10} analysis.
We show that Howard's algorithm actually terminates after at most
$O\bigl(\frac{m}{1-\gamma}\log\bigl(\frac{n}{1-\gamma}\bigr)\bigr)$
iterations, improving Ye's bound by a factor of $n$.
Interestingly, the only added ingredient needed to obtain this significant improvement
is a well-known relationship between Howard's policy iteration algorithm and Bellman's \cite{Bellman57}
\emph{value iteration} algorithm, an algorithm for \emph{approximating} the values of MDPs.

More significantly, and more surprisingly, we are able to obtain the same $O\bigl(\frac{m}{1-\gamma}\log\bigl(\frac{n}{1-\gamma}\bigr)\bigr)$ bound also for the
\emph{Strategy Iteration} (or \emph{Strategy Improvement}) algorithm for the solution of 2TBSGs. This supplies the \emph{first} strongly polynomial algorithm for solving 2TBSGs, with a fixed discount factor, solving a long standing open problem.
% The previously best known algorithm for solving discounted 2TBSG was a weakly polynomial time algorithm due to Littman \cite{litt}.

The strategy iteration algorithm is a natural generalization of Howard's policy iteration algorithm that can be used to solve 2TBSGs. The strategy iteration algorithm for discounted 2-player games is apparently first described by Rao {\em et al.} \cite{RCN}. Hoffman and Karp \cite{HoKa66} earlier described a related algorithm for a somewhat different class of SGs.

Prior to our strongly polynomial bound for the strategy iteration algorithm, the best time
available on the problem of solving discounted 2TBSGs was a polynomial, but not strongly polynomial, bound of Littman \cite{litt}, obtained essentially using value iteration. The best time bound expressed solely in terms of the number states and actions was a \emph{subexponential} bound of Ludwig \cite{Ludwig95}. (See also {Bj{\"o}rklund} and Vorobyov \cite{BjVo05,BjVo07} and Halman \cite{Halman07}.)
Interestingly, these subexponential bounds are obtained using randomized variants of the strategy iteration algorithm that mimic the combinatorial subexponential algorithms of Kalai \cite{Kalai92,Kalai97} and Matou{\v{s}}ek, Sharir and Welzl \cite{MaShWe96}  for solving \emph{LP-type} problems.

What makes our analysis of the strategy iteration algorithm surprising is the fact that Ye's analysis relies heavily on the LP formulation of MDPs. In contrast, no succinct LP formulation is known for 2TBSGs.
(Natural attempts fail. See Condon \cite{Condon93}.)
Our proof is based on finding natural game-theoretic quantities that
correspond to the LP-based quantities used by Ye, and by
reestablishing, via direct means, (improved versions) of the bounds obtained by Ye using LP duality.

% The \emph{limiting average cost} criteria is an alternative to the

% In contrast to Ye's and our results, that assume a fixed discount factor,

% Friedmann \cite{Friedmann09}, in a breakthrough result, has recently shown that the strategy iteration algorithm may require an \emph{exponential} number of iterations on \emph{non-discounted} 2TBSGs. His result applies, in fact, to \emph{Parity Games} (PGs), which are an extremely restricted class of \emph{deterministic} non-discounted 2TBSG. His result implies, in particular, that the number of iterations performed by the strategy iteration algorithm may be exponential when the discount factor is considered to be part of the input. Fearnley \cite{Fearnley10}, building on Friedmann's result, showed that Howard's policy iteration may require an exponential number of iterations on non-discounted MDPs.

Ye's \cite{Ye10} results and our results, combined with the recent results of
Friedmann \cite{Friedmann09} and Fearnley \cite{Fearnley10}, supply a \emph{complete characterization} of the complexity of the policy/strategy iteration algorithm for MDPs/2TBSGs. The policy/strategy iteration algorithms are \emph{strongly polynomial} for a fixed discount factor, but \emph{exponential} for \emph{non-discounted} problems, or when the discount factor is part of the input. (In non-discounted problems the discounting criteria is replaced by \emph{limiting average} criteria. In a sense, this is equivalent to letting the discount factor tend to~$1$. See, e.g., Derman \cite{Derman72}.)

The rest of this paper is organized as follows. In Section~\ref{S-2TBSG} we define the \emph{2-player turn-based stochastic games} (2TBSG) studied in this paper. In Sections~\ref{S-basic}, \ref{S-value} and~\ref{sec-strat} we summarize known results regarding these games. For completeness, these sections contain concise, but complete, proofs of all results. (The proofs in these three sections are \emph{not} the innovative part of this paper and may be skipped at first reading.) Finally, in Section~\ref{S-strong} we obtain our innovative \emph{strongly polynomial} bound on the complexity of the celebrated strategy iteration algorithm, solving a long-standing open problem. We end in Section~\ref{S-concl} with some concluding remarks and open problems.

\section{2-player turn-based stochastic games}\label{S-2TBSG}

% \subsection{Definitions}

% A 2-player turn-based stochastic game (2TSG) is a 2-player version of the MDPs considered in the previous section. Formally, it is a tuple $G=(S_1,S_2,A,c,p,\gamma)$, where the only difference with respect to the definition of Section~\ref{S-MDP} is that the state set $S$ is divided into two sets $S_1$ and $S_2$, where $S_1$ is the set of states controlled by player~1, and $S_2$ is the set of states controlled by player~2. We also let $S=S_1\cup S_2$.

% Without loss of generality, we may assume that $S=[n]$, and that $S_1=[n_1]$ and $S_2=[n_1+1,n]$, where $n_1=|S_1|$. We also let $n_2=|S_2|=n-n_1$. A 2TSG is then compactly represented by the action matrix $Q=J-\gamma P$, the cost vector $\cc$, and the number $n_1$.

Discounted stochastic games were first studied by Shapley
\cite{Shapley53}. In his games, the players perform \emph{simultaneous}, or \emph{concurrent}, actions. We consider the subclass of \emph{turn-based} stochastic games.

We briefly review the informal definition of 2-Player Turn-Based Stochastic Games (2TBSGs), before giving a formal definition. A game is composed of states and actions.
It starts at some initial state and proceeds, in discrete steps, indefinitely.
In each time step one of the players plays an action. (The game is thus a \emph{turn-based} or \emph{perfect information} game.)
Each action has a \emph{cost} associated with it. This is the cost paid by player~1 to player~2 when this action is played. (The game is therefore a \emph{zero-sum} game.)
Each action also has a \emph{probability distribution} on states associated with it. The next state, after playing a particular action, is chosen randomly according to this probability distribution. (The game is, in general, \emph{stochastic}.)
Finally, the game is \emph{discounted}. The first player tries to minimize the expected total discounted cost, while the second player tries to maximize it.

\begin{definition}[\bf Actions] An \emph{action} $a$ over a set of states $S$ is composed of a triplet $(s(a),p(a),c(a))$, where $s(a)\in S$ is the state from which $a$ can be played, $p(a)\in \Delta(S)$ is a probability distribution over states according to which the next state is chosen when~$a$ is played, and $c(a)\in \RR$ is  the \emph{cost} of~$a$.
\end{definition}

% Note that actions may be viewed as a stochastic generalization of weighted directed edges.

\begin{definition}[\bf 2-Player Turn-Based Stochastic Games]
A \emph{2-Player Turn-Based (Discounted) Stochastic Game} (2TBSG) is a tuple $G=(S_1,S_2,A,\gamma)$, where $S_1$ and $S_2$ are the set of states controlled by players~$1$ and~$2$, respectively, and $A$ is a set of actions. We assume that $S_1\cap S_2=\emptyset$ and let $S=S_1\cup S_2$. For every $i\in S$, we let $A_i=\{a\in A\mid s(a)=i\}$ be the set of actions that can be played from~$i$. We assume that $A_i\ne \emptyset$, for every $i\in S$. We let $A^1=\cup_{i\in S_1} A_i$ and $A^2=\cup_{i\in S_2} A_i$ be the sets of all actions that can be played by players~$1$ and~$2$, respectively. Finally, $0<\gamma<1$ is a fixed \emph{discount factor}. If the infinite sequence of actions taken by the two players is $a_0,a_1,\ldots$, then the \emph{total discounted cost} of this action sequence is $\sum_{k\ge 0} \gamma^k c(a_k)$.
\end{definition}

% A \emph{2-Player Turn-Based Stochastic Game} (2TBSG) is a game played by two players. The game is composed of a finite set $S$ of \emph{states} and a finite set $A$ of \emph{actions}. The state set $S$ is partitioned $S=S_1\dotcup S_2$ into two disjoint sets $S_1$ and $S_2$, where $S_j$ is the set of states controlled by player~$j$, for $j\in\{1,2\}$. Each action $a\in A$ has a specific state $s(a)\in S$ from which it can be played. For every $i\in S$, we let $A_i=s^{-1}(i)=\{a\in A\mid s(a)=i\}$ be the set of actions that can be played from state~$i$. We assume that $A_i\ne\emptyset$, for every $i\in S$.
% When the play is at a state $i\in S_j$, where $j\in\{1,2\}$, player~$j$ chooses an action $a\in A_i$. Playing action~$a$

% A game~$G$ is played as follows.
% The play starts at some initial state $i_0\in S$ and goes on indefinitely. When the game is in state $i\in S_j$, player~$j$ chooses an action $a\in A_i$. The next state is then determined according to the probability distribution $p(a)$. The goal of player~1 is to \emph{minimize} the expected \emph{discounted} cost of all the actions taken, while the goal of player~2 is to \emph{maximize} it. If the infinite sequence of actions taken by the two players is $a_0,a_1,\ldots$, then the discounted cost of this action sequence is $\sum_{k\ge 0} \gamma^k c(a_k)$, where
% $0<\gamma<1$ is the fixed \emph{discount factor}.

If one of the players has only a single action available from each
state under her control,
the game degenerates into a 1-player game
known as a \emph{Markov Decision Process}. (This happens, in
particular, when $S_1=\emptyset$ or $S_2=\emptyset$.)

% \begin{definition}[Markov Decision Processes (1-player games)]
% A game $G=(S_1,S_2,A,s,p,c,\gamma)$ in which $|A_i|=1$ for every $i\in S_2$ is said to be a \emph{minimization} Markov Decision Process (MDP). A game $G=(S_1,S_2,A,c,p,\gamma)$ in which $|A_i|=1$ for every $i\in S_1$ is said to be a \emph{maximization} Markov Decision Process (MDP).
% \end{definition}

We next define the \emph{probability} and \emph{action} matrices of 2TBSGs. These matrices provide a compact representation of 2TBSGs that greatly simplifies their manipulation.
Throughout the paper, we use $n=|S|$ and $m=|A|$ to denote the number of states and actions, respectively, in a game.

\begin{definition}[\bf Probability and action matrices] Let $G=(S_1,S_2,A,\gamma)$ be a 2TBSG.
We assume, without loss of generality, that $S=S_1\cup S_2=[n]$ and $A=[m]$.
% (We have $S_1,S_2\subseteq S$ and $A_i\subseteq A$, for $i\in S$.)
%
We let $P\in \RR^{m\times n}$, where $P_{a,i}=p(a)_i$ is the probability of ending up in state~$i$ after taking action~$a$, for every $a\in A=[m]$ and $i\in S=[n]$, be the \emph{probability matrix} of the game, and $\cc \in \RR^m$, where $\cc_a=c(a)$ is the cost of action $a\in A=[m]$, be its \emph{cost vector}.
We also let $J\in \RR^{m\times n}$ be a matrix such that $J_{a,i}=1$ if and only if $a\in A_i$, and~$0$ otherwise.
Finally, we let $Q=J-\gamma P$ be the \emph{action matrix} of~$G$.
\end{definition}

It is interesting to note that a 2TBSG is fully specified by its action matrix $Q=J-\gamma P$, its cost vector~$\cc$,
and the partition of $S=[n]$ into $S_1$ and $S_2$.
(Action matrices may be thought of as a stochastic and discounted generalization of the \emph{incidence matrices} of directed graphs.)

\begin{definition}[\bf Strategies, strategy profiles] A (positional) \emph{strategy} $\pi_j$ for player~$j$, is a mapping $\pi_j:S_j\to A$ such that $\pi_j(i)\in A_i$, for every $i\in S_j$. We say that player~$j$ uses strategy $\pi_j$ if whenever the game is in state~$i$, player~$j$ chooses action $\pi_j(i)$. % (Here $j\in\{1,2\}$.)
A \emph{strategy profile} $\pi=(\pi_1,\pi_2)$ is simply a pair of strategies for the two players.
We let $\Pi_j=\Pi_j(G)$, for $j\in\{1,2\}$, be the set of all strategies of player~$j$, and let $\Pi=\Pi(G)=\Pi_1\times \Pi_2$ be the set of all strategy profiles in~$G$.
% Note that $\pi$ may be viewed as a strategy in the MDP obtained
\end{definition}

We note that a strategy profile $\pi=(\pi_1,\pi_2)$ may be viewed as a
mapping $\pi:S\to A$, i.e., as a strategy in a 1-player version of the game.
All strategies considered in this paper are positional.
When convenient, we also view a strategy~$\pi_j$ or a strategy profile~$\pi$
as subsets $\pi_j(S),\pi(S)\subseteq A$.
A strategy profile $\pi=(\pi_1,\pi_2)$, when viewed as a subset of~$A$, is simply the \emph{union} $\pi_1\cup \pi_2$.
We let $P_\pi \in \RR^{n\times n}$ be the matrix obtained by selecting the \emph{rows} of~$P$ whose indices belong to~$\pi$. Note that~$P_\pi$ is a (row) \emph{stochastic matrix}. Its elements are non-negative and the elements in each row sum to~$1$.
Similarly, $\cc_\pi \in \RR^{n}$ is the vector containing the costs of the actions that belong to~$\pi$. We conveniently have $J_\pi=I$ and $Q_\pi=I-\gamma P_\pi$, for every strategy profile~$\pi$.

% \begin{definition}[Induced MDPs]
% Let $G=(S_1,S_2,A,c,p,\gamma)$ be a 2TBSG and let $\pi_2$ be a strategy for player~$2$. Then, $G_{\star,\pi_2}$ is the (minimization) MDP $G=(S_1,S_2,A\cap \pi_2,c,p,\gamma)$ in which player~$2$ is forced to play according to~$\pi_2$. Similarly, if $\pi_1$ is a strategy for player~$1$, then $G_{\pi_1,\star}$ is the (maximization) MDP in which player~$1$ is forced to play according to~$\pi_1$.
% \end{definition}

\begin{definition}[\bf Value vectors] For every strategy profile $\pi=(\pi_1,\pi_2)\in \Pi$, we let $\vv_\pi=\vv_{\pi_1,\pi_2}\in \RR^n$ be a vector such that $(\vv_\pi)_i$, for every $i\in S$, is the \emph{expected} total discounted cost when the game starts at state~$i$, player~$1$ uses strategy~$\pi_1$, and player~$2$ uses strategy~$\pi_2$.
\end{definition}

Given two vectors $\uu,\vv \in \RR^n$, we say that $\uu \le \vv$ if and only if $\uu_i\le \vv_i$, for every $1\le i\le n$. We say that $\uu<\vv$ if and only if $\uu\le \vv$ and $\uu\ne \vv$.

% \begin{definition}[Optimal strategies for MDPs]
% A strategy $\pi^*$ is said to be an \emph{optimal} strategy for a minimization MDP if and only if $\vv_{\pi^*} \le \vv_{\pi}$, for every $\pi \in \Pol=\Pol_1$. Similarly, a strategy $\pi^*$ is said to be an \emph{optimal} strategy for a maximization MDP if and only if $\vv_{\pi^*} \ge \vv_{\pi}$, for every $\pi \in \Pol=\Pol_2$.
% \end{definition}

% The existence of optimal strategies for MDPs is not immediate.
% The following theorem, however, is well known.
% (Several different proofs of the theorem would appear in the sequel.)
% (We in fact, give below three different proofs of it.)

% \begin{theorem}\label{T-MDP} Every (minimization or maximization) MDP has an optimal policy~$\pi^*$.
% \end{theorem}

% Optimal policies are not necessarily unique, but the optimal value vector $\vv^*=\vv_{\pi^*}$ is, of course, unique.

\begin{definition}[\bf Optimal counter strategies] Let $G$ be a 2TBSG and
let $\pi_2\in \Pol_2(G)$ be a strategy of player~$2$. A strategy $\pi_1$ for player~$1$ is said to be an \emph{optimal counter-strategy} against~$\pi_2$, if and only if
$\vv_{\pi_1,\pi_2}\le \vv_{\pi'_1,\pi_2}$, for every $\pi'_1\in \Pol_1(G)$. Similarly,
a strategy $\pi_2$ for player~$2$ is said to be an \emph{optimal counter-strategy} against~$\pi_1$, if and only if
$\vv_{\pi_1,\pi_2}\ge \vv_{\pi_1,\pi'_2}$, for every $\pi'_2\in \Pol_2(G)$.
For every $\pi_1\in \Pol_1(G)$, we let $\tau_2(\pi_1)$ be an optimal counter strategy against~$\pi_1$, if one exists. For every $\pi_2\in \Pol_2(G)$, we let $\tau_1(\pi_2)$ be an optimal counter strategy against~$\pi_2$, if one exists.
\end{definition}

It is not immediately clear that optimal counter strategies always exist. (Note, that $\vv_{\pi_1,\pi_2}\le \vv_{\pi'_1,\pi_2}$ and $\vv_{\pi_1,\pi_2}\ge \vv_{\pi_1,\pi'_2}$ are \emph{vector} inequalities. As defined, optimal counter strategies need to be optimal for every initial state.) Furthermore, optimal counter strategies, if they exist, need not be unique.
It is well known, however, that optimal counter strategies do always
exist, as we shall also show below.

In a two-player zero-sum game, an optimal strategy is by definition
one that secures
the best possible guarantee on the expected payoff against any
opponent. As with  finite games,
pairs of optimal strategies
in a zero-sum stochastic game coincide with the Nash equilibria of
the game. This was established by Shapley \cite{Shapley53}. For
brevity, we take this characterization to be the definition
of an optimal strategy.

%\begin{definition}[Optimal counter strategies] Let $G$ be a 2TBSG and let
% let $\pi_2\in \Pol_2(G)$ be a strategy of player~$2$. A strategy $\pi_1$ for player~$1$ is said to be an \emph{optimal counter-strategy} against strategy $\pi_2$, if and only if $\pi_1$ is an optimal strategy for player~$1$ in the induced MDP $G_{\star,\pi_2}$. Similarly, a strategy $\pi_2$ of player~$2$ is said to be an \emph{optimal counter-strategy} against strategy $\pi_1$, if and only if $\pi_2$ is an optimal strategy for player~$2$ in the induced MDP $G_{\pi_1,\star}$. For every $\pi_1\in \Pol_1$, we let $\tau_2(\pi_1)\in \Pol_2$ be an optimal counter-strategy against $\pi_1$, and for every $\pi_2\in \Pol_2$, we let $\tau_1(\pi_2)\in \Pol_1$ be an optimal counter-strategy against $\pi_2$. (Note that optimal counter-strategies are not necessarily unique.)
% \end{definition}

\begin{definition}[\bf Optimal strategies] A strategy profile $\pi=(\pi_1,\pi_2)\in \Pol(G)$ is said to be \emph{optimal} if and only if $\pi_1$ is an optimal counter strategy against $\pi_2$, and $\pi_2$ is an optimal counter strategy against~$\pi_1$. In such a case we also say that $\pi_1$ is an optimal strategy for player~$1$ and that $\pi_2$ is an optimal strategy for player~$2$.
\end{definition}

Shapley \cite{Shapley53} also established the following theorem.

\begin{theorem}\label{T-opt} Every 2TBSG has an optimal strategy profile. If $\pi$ and $\pi'$ are two optimal strategy profiles then $\vv_\pi=\vv_{\pi'}$.
\end{theorem}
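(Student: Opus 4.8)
The plan is to build everything from the \emph{Shapley operator} $T:\RR^n\to\RR^n$, defined by $(T\vv)_i=\min_{a\in A_i}\bigl(c(a)+\gamma(P\vv)_a\bigr)$ for $i\in S_1$ and $(T\vv)_i=\max_{a\in A_i}\bigl(c(a)+\gamma(P\vv)_a\bigr)$ for $i\in S_2$, where $(P\vv)_a=\sum_j P_{a,j}\vv_j$. Two elementary facts drive the argument. First, every value vector is given explicitly by $\vv_\pi=Q_\pi^{-1}\cc_\pi$: the value satisfies $\vv_\pi=\cc_\pi+\gamma P_\pi\vv_\pi$, i.e.\ $Q_\pi\vv_\pi=\cc_\pi$, and $Q_\pi=I-\gamma P_\pi$ is invertible with $Q_\pi^{-1}=\sum_{k\ge 0}\gamma^k P_\pi^k\ge\zz$ entrywise, since $P_\pi$ is stochastic and $0<\gamma<1$. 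This nonnegativity is the \emph{monotonicity} that makes the whole proof work. Second, I will show that $T$ has a unique fixed point $\vv^*$, that the profile greedily extracted from $\vv^*$ has value exactly $\vv^*$, and that this profile is optimal; uniqueness of the value of \emph{any} optimal profile is then a short separate argument.

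For the fixed point, I would check that $T$ is a contraction in the $\ell_\infty$ norm with factor $\gamma$. Since $\min$ and $\max$ are nonexpansive, for any $\uu,\vv$ and any state $i$ we have $|(T\uu)_i-(T\vv)_i|\le\gamma\max_{a\in A_i}\bigl|\sum_j P_{a,j}(\uu_j-\vv_j)\bigr|\le\gamma\|\uu-\vv\|_\infty$, using that each row $p(a)$ of $P$ is a probability distribution. Thus $\|T\uu-T\vv\|_\infty\le\gamma\|\uu-\vv\|_\infty$, and the Banach fixed-point theorem supplies a unique $\vv^*$ with $T\vv^*=\vv^*$.

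Existence of an optimal profile is the substantive step. From $\vv^*$ define $\pi_1^*(i)\in\argmin_{a\in A_i}(c(a)+\gamma(P\vv^*)_a)$ for $i\in S_1$ and $\pi_2^*(i)\in\argmax_{a\in A_i}(c(a)+\gamma(P\vv^*)_a)$ for $i\in S_2$. Because $\vv^*$ is fixed, $\vv^*=\cc_{\pi^*}+\gamma P_{\pi^*}\vv^*$, i.e.\ $Q_{\pi^*}\vv^*=\cc_{\pi^*}$, so $\vv_{\pi^*}=\vv^*$. To see that $\pi_1^*$ is an optimal counter-strategy against $\pi_2^*$, take any $\pi_1'$ and set $\pi'=(\pi_1',\pi_2^*)$. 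At each state of $S_1$ the greedy minimality of $\pi_1^*$ gives $\vv^*_i\le c(\pi_1'(i))+\gamma(P\vv^*)_{\pi_1'(i)}$, while at each state of $S_2$ both profiles use $\pi_2^*$ so equality holds; hence $Q_{\pi'}\vv^*\le\cc_{\pi'}$. Multiplying on the left by $Q_{\pi'}^{-1}\ge\zz$ preserves the inequality and yields $\vv^*\le Q_{\pi'}^{-1}\cc_{\pi'}=\vv_{\pi_1',\pi_2^*}$. As this holds for every $\pi_1'$, and $\vv_{\pi^*}=\vv^*$, the strategy $\pi_1^*$ is optimal against $\pi_2^*$; the symmetric argument (reversing all inequalities and using $Q^{-1}\ge\zz$ again) shows $\pi_2^*$ is optimal against $\pi_1^*$, so $\pi^*$ is optimal. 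I expect this monotonicity-based upgrade from the one-sided Bellman inequality $Q_{\pi'}\vv^*\le\cc_{\pi'}$ to the value inequality $\vv^*\le\vv_{\pi'}$ to be the one genuinely non-obvious step.

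For the uniqueness of the value I would avoid the operator entirely and argue purely from the definition of optimality. If $\pi=(\pi_1,\pi_2)$ and $\pi'=(\pi_1',\pi_2')$ are both optimal, then chaining the four defining inequalities gives $\vv_{\pi_1,\pi_2}\le\vv_{\pi_1',\pi_2}\le\vv_{\pi_1',\pi_2'}\le\vv_{\pi_1,\pi_2'}\le\vv_{\pi_1,\pi_2}$, justified respectively by: $\pi_1$ optimal against $\pi_2$; $\pi_2'$ optimal against $\pi_1'$; $\pi_1'$ optimal against $\pi_2'$; and $\pi_2$ optimal against $\pi_1$. Since the chain closes into a cycle, all four vectors coincide componentwise, and in particular $\vv_\pi=\vv_{\pi'}$.
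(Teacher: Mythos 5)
Your proof is correct and follows essentially the same route as the paper's (Shapley's original argument): contraction of $\TT$ plus Banach gives the unique fixed point $\vv^*$, the greedily extracted profile has value $\vv^*$, and optimality is obtained by pushing the componentwise Bellman inequality through the entrywise-nonnegative matrix $(I-\gamma P_{\pi'})^{-1}$ --- your step $Q_{\pi'}\vv^*\le\cc_{\pi'}\Rightarrow\vv^*\le\vv_{\pi'}$ is exactly the paper's optimality condition (Lemma~\ref{L-opt-2}, reached via Lemma~\ref{L-pp}) with the modified-cost bookkeeping $(\cc^{\pi^*})_{\pi'}=\cc_{\pi'}-Q_{\pi'}\vv_{\pi^*}\ge\zz$ inlined. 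The only point of divergence is the uniqueness clause, where your four-inequality exchange chain is valid and in fact more explicit than the paper, which leaves that part implicit (there it follows because the value vector of any optimal profile is a fixed point of $\TT$, hence equals $\vv^*$).
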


Theorem~\ref{T-opt} immediately implies the existence of optimal counter strategies against any strategy.
It is easy to see that $\pi_1$ is an optimal strategy for player~$1$ if and only if $\vv_{\pi_1,\tau_2(\pi_1)} \le \vv_{\pi'_1,\tau_2(\pi'_1)}$, for every $\pi'_1\in \Pol_1$. An analogous condition clearly holds for player~$2$.
The main result of this paper is a proof that a pair of optimal
strategies can be computed in \emph{strongly} polynomial time, when
the discount factor is constant. % Furthermore, this can be done using

%\begin{definition}[Optimal strategies for 2TBSG] A strategy $\pi^*_1\in \Pol_1$ is said to be an \emph{optimal strategy} for player~$1$ if and only if $\vv_{\pi^*_1,\tau_2(\pi^*_1)} \le \vv_{\pi_1,\tau_2(\pi_1)}$, for every strategy $\pi_1\in \Pol_1$. Similarly, a strategy $\pi^*_2\in \Pol_2$ is said to be an \emph{optimal strategy} for player~$2$ if and only if $\vv_{\tau_1(\pi^*_2),\pi^*_2} \ge \vv_{\tau_1(\pi_2),\pi_2}$, for every strategy $\pi_2\in \Pol_2$. A strategy profile $\pi^*=(\pi^*_1,\pi^*_2)$ is said to be optimal if $\pi^*_1$ is an optimal strategy for player~$1$ and $\pi^*_2$ is an optimal strategy for player~$2$.
%\end{definition}

% It is not difficult to check that if $\pi^*=(\pi^*_1,\pi^*_2)$ is an optimal strategy profile, then $\pi^*_1$ is an optimal counter-strategy against $\pi^*_2$, and $\pi^*_2$ is an optimal counter-strategy against $\pi^*_1$.

% The following Theorem, which is a generalization of Theorem~\ref{T-MDP}, is also well-known.

% \begin{theorem}\label{T-2TBSG} Every 2TBSG has an optimal strategy profile $\pi^*=(\pi^*_1,\pi^*_2)$.
% \end{theorem}

\section{Basic results}\label{S-basic}

For any strategy profile $\pi$, the matrix $(I-\gamma P_\pi)$ plays a prominent role in the sequel. (Recall that $P_\pi$ is the matrix obtained by selecting the rows of $P$ that correspond to actions that belong to~$\pi$.) We thus start with the following lemma whose trivial proof is omitted.

\begin{lemma}\label{L-gP-2} For any strategy profile $\pi$, the matrix $(I-\gamma P_\pi)$ is invertible and $$(I-\gamma P_\pi)^{-1}= \sum_{k\ge 0} (\gamma P_\pi)^k.$$ All entries of $(I-\gamma P_\pi)^{-1}$ are non-negative and the entries on the diagonal are strictly positive.
\end{lemma}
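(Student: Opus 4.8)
The plan is to show three things in sequence: invertibility of $I-\gamma P_\pi$, the geometric-series formula for its inverse, and the claimed sign pattern of the entries. The key observation underlying all of this is that $P_\pi$ is a (row) stochastic matrix, so its spectral radius is exactly~$1$; consequently $\gamma P_\pi$ has spectral radius $\gamma<1$, and every eigenvalue of $I-\gamma P_\pi$ has the form $1-\gamma\lambda$ with $|\lambda|\le 1$, hence has modulus at least $1-\gamma>0$. In particular no eigenvalue of $I-\gamma P_\pi$ is zero, which gives invertibility. An alternative route that avoids spectral language is to argue directly that $\sum_{k\ge 0}(\gamma P_\pi)^k$ converges: since $P_\pi$ is stochastic, each entry of $P_\pi^k$ lies in $[0,1]$, so the entries of $(\gamma P_\pi)^k=\gamma^k P_\pi^k$ are bounded by $\gamma^k$, and $\sum_k \gamma^k$ converges; thus the matrix series converges entrywise and absolutely.

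First I would establish the series formula. Let $M=\sum_{k\ge 0}(\gamma P_\pi)^k$, which is well-defined by the convergence argument above. Then compute
\begin{equation*}
(I-\gamma P_\pi)\,M=\sum_{k\ge 0}(\gamma P_\pi)^k-\sum_{k\ge 0}(\gamma P_\pi)^{k+1}=I,
\end{equation*}
by telescoping, where the telescoping is justified because the tail $(\gamma P_\pi)^{N}\to \zz$ as $N\to\infty$. The identical computation with the factors in the other order gives $M(I-\gamma P_\pi)=I$, so $M$ is a genuine two-sided inverse; this simultaneously proves invertibility and the formula $(I-\gamma P_\pi)^{-1}=\sum_{k\ge 0}(\gamma P_\pi)^k$ in one stroke, which is cleaner than treating the two claims separately.

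Finally I would read off the sign pattern from the series. Each $(\gamma P_\pi)^k=\gamma^k P_\pi^k$ has non-negative entries, since $P_\pi$ is non-negative and products and non-negative scalar multiples of non-negative matrices stay non-negative; summing non-negative matrices keeps all entries non-negative, giving the first sign claim. For the strictly positive diagonal, the $k=0$ term is $I$, which contributes exactly~$1$ to each diagonal entry, while every subsequent term $\gamma^k P_\pi^k$ contributes a non-negative amount to the diagonal; hence each diagonal entry of the sum is at least~$1>0$. I do not expect any genuine obstacle here: the only point requiring a little care is justifying the interchange of the algebraic manipulation with the infinite summation (convergence and vanishing of the tail), and the entrywise bound by $\gamma^k$ settles that. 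This is precisely why the authors describe the proof as trivial and omit it.
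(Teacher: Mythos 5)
Your proof is correct and is exactly the standard Neumann-series argument the paper has in mind when it calls the proof ``trivial'' and omits it: convergence of $\sum_{k\ge 0}(\gamma P_\pi)^k$ via the entrywise bound $\gamma^k$, telescoping on both sides to get a two-sided inverse, and reading off non-negativity and the diagonal lower bound of $1$ from the $k=0$ term. Nothing is missing, and the care you take with tail convergence is precisely the only point that needed justification.
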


\begin{lemma}\label{L-value-vector-2}
For every strategy profile $\pi\in \Pi$ and every $0<\gamma<1$, we have
%the value vector~$\vv_\pi$ is %well defined and is  equal to
$$\vv_\pi = (I-\gamma P_\pi)^{-1} \cc_\pi.$$
\end{lemma}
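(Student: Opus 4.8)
The goal is to prove that the value vector $\vv_\pi$ satisfies $\vv_\pi = (I-\gamma P_\pi)^{-1}\cc_\pi$.

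The key insight is that the value vector should satisfy a self-consistency (Bellman-type) equation. When the game starts at state $i$ and both players follow the fixed strategy profile $\pi$, the action taken at state $i$ is fixed (it's $\pi(i)$, viewing $\pi$ as a map $S \to A$). The cost incurred consists of two parts: the immediate cost $(\cc_\pi)_i = c(\pi(i))$ paid now, plus the discounted expected future cost. The future cost is the expected value of the game starting from the next state, discounted by $\gamma$. Since the next state is distributed according to $p(\pi(i))$, the $i$-th row of $P_\pi$, the expected future value is $\gamma (P_\pi \vv_\pi)_i$.

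So the plan is:

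First, I would establish the Bellman equation $\vv_\pi = \cc_\pi + \gamma P_\pi \vv_\pi$. This is essentially the definition of the value unwound one step. The cleanest way is to use the definition directly: $(\vv_\pi)_i = \sum_{k \geq 0}\gamma^k \mathbb{E}[c(a_k)]$ where the expectation is over the random trajectory under $\pi$ starting at $i$. I would separate out the $k=0$ term (which is deterministically $(\cc_\pi)_i$) and re-index the remaining sum. The remaining sum $\sum_{k \geq 1}\gamma^k \mathbb{E}[c(a_k)]$ factors as $\gamma \sum_j (P_\pi)_{ij}\sum_{k\geq 0}\gamma^k \mathbb{E}[c(a_k) \mid \text{start at } j]$, using the fact that after the first transition (to state $j$ with probability $(P_\pi)_{ij}$) the process is a fresh copy of the game starting at $j$. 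This inner sum is exactly $(\vv_\pi)_j$, giving $(\vv_\pi)_i = (\cc_\pi)_i + \gamma (P_\pi \vv_\pi)_i$.

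Second, I would rewrite the Bellman equation $\vv_\pi = \cc_\pi + \gamma P_\pi \vv_\pi$ as $(I - \gamma P_\pi)\vv_\pi = \cc_\pi$. Since Lemma~\ref{L-gP-2} guarantees that $(I - \gamma P_\pi)$ is invertible, I can left-multiply by its inverse to conclude $\vv_\pi = (I-\gamma P_\pi)^{-1}\cc_\pi$, as desired.

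The main obstacle is the rigorous justification of the first step — the interchange of summation, expectation, and the re-indexing that uses the Markov property (memorylessness of the process under a positional strategy). In particular, one must verify that the infinite series converges absolutely so that the manipulations are valid; this follows because the costs are bounded (finitely many actions) and $\sum_{k\geq 0}\gamma^k$ converges for $0 < \gamma < 1$. Once absolute convergence is in hand, Fubini/Tonelli justifies the interchange, and the stationarity of the trajectory distribution under the positional profile $\pi$ justifies identifying the tail sum with $\vv_\pi$ at the successor state.

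Here is the proof I would write.

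\begin{proofd}
Fix a strategy profile $\pi\in\Pi$, viewed as a map $\pi:S\to A$. Consider the game started at an arbitrary state $i\in S$, with both players following $\pi$. Let $a_0,a_1,\dots$ be the (random) sequence of actions played. Since $\pi$ is positional, the action $a_0$ played from $i$ is deterministically $\pi(i)$, so $a_0=\pi(i)$ and the immediate cost is $c(\pi(i))=(\cc_\pi)_i$. The next state is chosen according to the distribution $p(\pi(i))$, that is, state $j$ is reached with probability $P_{\pi(i),j}=(P_\pi)_{ij}$.

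By definition,
$$(\vv_\pi)_i \;=\; \sum_{k\ge 0}\gamma^k\,\mathbb{E}\bigl[c(a_k)\bigr].$$
The costs $c(a)$ range over a finite set and are therefore bounded in absolute value by some $C$, so $\sum_{k\ge 0}\gamma^k\,|\mathbb{E}[c(a_k)]|\le C\sum_{k\ge 0}\gamma^k=\frac{C}{1-\gamma}<\infty$; the series converges absolutely and all rearrangements below are justified. Separating the $k=0$ term and conditioning on the first transition, we obtain
$$(\vv_\pi)_i \;=\; (\cc_\pi)_i \;+\; \sum_{k\ge 1}\gamma^k\,\mathbb{E}\bigl[c(a_k)\bigr]
\;=\; (\cc_\pi)_i \;+\; \gamma\sum_{j\in S}(P_\pi)_{ij}\sum_{k\ge 0}\gamma^{k}\,\mathbb{E}\bigl[c(a_{k+1})\,\big|\,\text{state $j$ reached after $a_0$}\bigr].$$
Because $\pi$ is positional, conditioned on the event that state $j$ is reached after the first step, the subsequent sequence $a_1,a_2,\dots$ has exactly the same distribution as a fresh trajectory $a_0',a_1',\dots$ generated by $\pi$ started at $j$. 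Hence the inner sum equals $(\vv_\pi)_j$, and
$$(\vv_\pi)_i \;=\; (\cc_\pi)_i \;+\; \gamma\sum_{j\in S}(P_\pi)_{ij}\,(\vv_\pi)_j \;=\; (\cc_\pi)_i + \gamma\,(P_\pi\vv_\pi)_i.$$
Since $i\in S$ was arbitrary, this is the vector identity $\vv_\pi=\cc_\pi+\gamma P_\pi\vv_\pi$, i.e.
$$(I-\gamma P_\pi)\,\vv_\pi \;=\; \cc_\pi.$$
By Lemma~\ref{L-gP-2} the matrix $(I-\gamma P_\pi)$ is invertible, so multiplying on the left by $(I-\gamma P_\pi)^{-1}$ yields $\vv_\pi=(I-\gamma P_\pi)^{-1}\cc_\pi$, as claimed.
\end{proofd}
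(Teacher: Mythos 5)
Your proof is correct, but it takes a different route from the paper's. The paper's proof is a direct global summation: it observes that under $\pi$ the process is a Markov chain with transition matrix $P_\pi$, so the vector of expected costs incurred at step $k$ is $\gamma^k P_\pi^k \cc_\pi$, and then sums the whole series, identifying $\sum_{k\ge 0}(\gamma P_\pi)^k$ with $(I-\gamma P_\pi)^{-1}$ via the Neumann-series formula of Lemma~\ref{L-gP-2}. You instead perform a one-step decomposition: you condition on the first transition to derive the Bellman fixed-point equation $\vv_\pi=\cc_\pi+\gamma P_\pi\vv_\pi$, and then solve the resulting linear system, using only the \emph{invertibility} part of Lemma~\ref{L-gP-2} rather than the explicit series expression for the inverse. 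The trade-off: the paper's argument is shorter and makes the dependence on the $k$-step transition probabilities explicit, but it silently performs the same interchange of expectation and infinite sum that you justify; your argument is longer but makes the probabilistic bookkeeping (absolute convergence, Fubini, the Markov property under a positional profile) explicit, and it isolates the Bellman equation $\vv_\pi=\cc_\pi+\gamma P_\pi\vv_\pi$ as a standalone identity — a fact the paper itself uses later (e.g., in the proofs of Lemmas~\ref{L-G1} and~\ref{L-SI-2}) without restating it. Both proofs are complete and rest on the same underlying convergence fact; yours simply packages it recursively rather than as a closed-form sum.
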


\begin{proof} When the players use the strategy profile~$\pi$, the process becomes a Markov chain with rewards with transition matrix~$P_\pi$. In particular, for every $i,j\in [n]$ and every $k\ge 0$, $(P^k_\pi)_{i,j}$ is the probability that a game that starts at state~$i$ is in state~$j$ after exactly $k$ steps. The expected total discounted costs, starting from all states are thus
\[\displaystyle\vv_\pi \;=\; \biggl(\sum_{k\ge 0} (\gamma^k P_\pi^k)\biggr) \cc_\pi \;=\; (I-\gamma P_{\pi})^{-1} \cc_\pi.\qedhere\]
\end{proof}

\begin{definition}[\bf Modified costs]\label{D-modified-2}
The \emph{modified cost} vector $\cc^{\pi}\in \RR^m$ corresponding to a strategy profile~$\pi$ is defined to be
% defined as
% $$\s^\pi \;=\; (\cc+\gamma P \vv_\pi) - \vv^\pi \;=\; \cc - (J-\gamma P)\vv_\pi .$$
$$\cc^\pi \;=\; \cc - (J-\gamma P)\vv_\pi .$$
\end{definition}

The \emph{modified cost vector} $\cc^\pi$ is obtained from $\cc$ via a \emph{potential transformation} that uses $\vv_\pi$ as a vector of potentials. (If $h:V\to\RR$ is a function assigning potentials to the states, then the modified cost~$c_h(a)$ is defined as $c_h(a)=c(a)-h(a)+\gamma\sum_{j\in S} p(a)_j h(j)$.)

It is important to stress the difference between $\cc_\pi\in \RR^n$, the vector obtained by selecting the entries of $\cc$ corresponding to strategy profile~$\pi$, and the modified cost vector $\cc^\pi = \cc - (J-\gamma P)\vv_\pi\in \RR^m$ of Definition~\ref{D-modified-2}. (This distinction may be confusing at first, but it is extremely useful.) %, as we shall see below.)
% It is also important to note that $(\cc^\pi)_\pi=\zz$, where $\zz$ is a zero vector.

We let $\zz$ be an all zero vector. (The dimension of $\zz$ will depend on the context.)
Using Lemma~\ref{L-value-vector-2} we immediately get the following basic but important relation. % $(\cc^\pi)_\pi=\zz$, where $\zz$ is a zero vector.

\begin{lemma}\label{L-cpi} For every strategy profile $\pi$ we have $(\cc^\pi)_\pi=\zz$.
\end{lemma}

\begin{definition}[\bf Modified value vectors]
For every two strategy profiles $\pi,\pi'$, we let $\vv^{\pi}_{\pi'}$ be the value vector of $\pi'$ corresponding to the modified cost vector $\cc^\pi$.
\end{definition}

\begin{lemma}\label{L-modified-2} For every two strategy profiles $\pi',\pi$ we have $$\vv^{\pi}_{\pi'} = \vv_{\pi'}-\vv_\pi.$$
\end{lemma}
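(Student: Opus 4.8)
The plan is to unwind the two definitions and then invoke Lemma~\ref{L-value-vector-2} a second time, now with the modified cost vector in place of~$\cc$. By definition, $\vv^\pi_{\pi'}$ is the value vector of the profile~$\pi'$ under the costs $\cc^\pi$, so Lemma~\ref{L-value-vector-2} (applied verbatim, but reading $\cc^\pi$ for~$\cc$) gives
\[
\vv^\pi_{\pi'} \;=\; (I-\gamma P_{\pi'})^{-1}\,(\cc^\pi)_{\pi'},
\]
where $(\cc^\pi)_{\pi'}\in\RR^n$ is the sub-vector of the modified cost vector $\cc^\pi\in\RR^m$ obtained by selecting the $n$ entries indexed by the actions in~$\pi'$. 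So the entire statement reduces to identifying this restricted vector.

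The key step is therefore to compute $(\cc^\pi)_{\pi'}$. Starting from Definition~\ref{D-modified-2}, $\cc^\pi=\cc-(J-\gamma P)\vv_\pi=\cc-Q\vv_\pi$, and selecting the rows indexed by~$\pi'$ commutes with the matrix--vector product in the obvious way: restricting $Q\vv_\pi$ to the rows of~$\pi'$ is the same as first restricting $Q$ to those rows and then multiplying, i.e.\ $(Q\vv_\pi)_{\pi'}=Q_{\pi'}\vv_\pi$. Hence
\[
(\cc^\pi)_{\pi'} \;=\; \cc_{\pi'}-Q_{\pi'}\vv_\pi \;=\; \cc_{\pi'}-(I-\gamma P_{\pi'})\vv_\pi,
\]
where the last equality uses the identity $Q_{\pi'}=I-\gamma P_{\pi'}$ recorded just before Definition~\ref{D-value}. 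This row-selection bookkeeping is the only thing requiring care; it is precisely the distinction the paper emphasizes between the restricted vector $\cc_{\pi'}\in\RR^n$ and the full modified cost vector $\cc^\pi\in\RR^m$.

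Substituting this expression back and using invertibility of $(I-\gamma P_{\pi'})$ from Lemma~\ref{L-gP-2}, the factor $(I-\gamma P_{\pi'})^{-1}(I-\gamma P_{\pi'})$ telescopes to the identity:
\[
\vv^\pi_{\pi'} \;=\; (I-\gamma P_{\pi'})^{-1}\bigl[\cc_{\pi'}-(I-\gamma P_{\pi'})\vv_\pi\bigr] \;=\; (I-\gamma P_{\pi'})^{-1}\cc_{\pi'}-\vv_\pi \;=\; \vv_{\pi'}-\vv_\pi,
\]
where the final equality is Lemma~\ref{L-value-vector-2} once more. I expect no genuine obstacle here: the computation is a one-line cancellation, and the only place to slip is in asserting that restriction to the rows of~$\pi'$ passes through the product $Q\vv_\pi$, which holds because $Q_{\pi'}$ is by definition exactly the row-restriction of~$Q$.
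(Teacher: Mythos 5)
Your proof is correct and follows essentially the same route as the paper's: both apply Lemma~\ref{L-value-vector-2} with the modified cost vector, expand $(\cc^\pi)_{\pi'}=\cc_{\pi'}-(I-\gamma P_{\pi'})\vv_\pi$ using $Q_{\pi'}=I-\gamma P_{\pi'}$, and cancel. The paper performs the row-selection step silently inside a three-line display, whereas you justify it explicitly; this is a presentational difference only, and your argument is sound.
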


\begin{proof}By Definition~\ref{D-modified-2} and Lemma~\ref{L-value-vector-2} we have
\[\begin{array}[b]{lll}
\vv^\pi_{\pi'} &=& (I-\gamma P_{\pi'})^{-1}(\cc^\pi)_{\pi'} \\
& = & (I-\gamma P_{\pi'})^{-1} (\cc_{\pi'}-(I-\gamma P_{\pi'})\vv_\pi)\\
& = & \vv_{\pi'}-\vv_{\pi}.
\end{array}\qedhere\]
\end{proof}

Recall that $A^1=\cup_{i\in S_1} A_i$ and $A^2=\cup_{i\in S_2} A_i$.

\begin{lemma}\label{L-opt-2}{\bf (Optimality condition)} A strategy profile $\pi$ is optimal iff $(\cc^\pi)_{A^1}\ge \zz$ and $(\cc^\pi)_{A^2}\le \zz$.
\end{lemma}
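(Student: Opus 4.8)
The plan is to reduce the entire statement to a single \emph{monotonicity identity} relating the modified cost of a profile to value differences, and then to read off optimality by a one-action switching argument carried out separately for each player. Concretely, for any two profiles $\pi,\pi'$ I would first record
$$\vv_{\pi'}-\vv_\pi \;=\; \vv^\pi_{\pi'} \;=\; (I-\gamma P_{\pi'})^{-1}(\cc^\pi)_{\pi'}.$$
The left equality is Lemma~\ref{L-modified-2}; the right one is Lemma~\ref{L-value-vector-2} applied to the game whose cost vector is $\cc^\pi$ (a potential transformation leaves $P$, $J$, and hence $P_{\pi'}$ unchanged). Combining this with Lemma~\ref{L-gP-2} — all entries of $(I-\gamma P_{\pi'})^{-1}$ are non-negative and its diagonal is strictly positive — gives the tool I need: $(\cc^\pi)_{\pi'}\ge\zz$ implies $\vv_{\pi'}\ge\vv_\pi$, and $(\cc^\pi)_{\pi'}\le\zz$ implies $\vv_{\pi'}\le\vv_\pi$; moreover, if the $i$-th coordinate of $(\cc^\pi)_{\pi'}$ is strictly negative (resp.\ positive) then so is the $i$-th coordinate of $\vv_{\pi'}-\vv_\pi$, because that coordinate of the product is a sum of non-positive (resp.\ non-negative) terms of which the diagonal term is strictly so.

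For sufficiency, assume $(\cc^\pi)_{A^1}\ge\zz$ and $(\cc^\pi)_{A^2}\le\zz$. To see that $\pi_1$ is an optimal counter strategy against $\pi_2$, fix any $\pi'_1\in\Pol_1$ and put $\pi'=(\pi'_1,\pi_2)$. The actions selected by $\pi'$ are those of $\pi'_1$ (all in $A^1$, so of non-negative modified cost) together with those of $\pi_2$, which coincide with player~$2$'s actions in $\pi$ and hence have modified cost $\zz$ by Lemma~\ref{L-cpi}. Thus $(\cc^\pi)_{\pi'}\ge\zz$ and $\vv_{\pi_1,\pi_2}\le\vv_{\pi'_1,\pi_2}$. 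Symmetrically, for any $\pi'_2$ and $\pi'=(\pi_1,\pi'_2)$ the selected actions are those of $\pi_1$ (modified cost $\zz$) and of $\pi'_2$ (in $A^2$, modified cost $\le\zz$), so $(\cc^\pi)_{\pi'}\le\zz$ and $\vv_{\pi_1,\pi'_2}\le\vv_{\pi_1,\pi_2}$. Hence $\pi_1$ and $\pi_2$ are mutual optimal counter strategies and $\pi$ is optimal.

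For necessity I would argue the contrapositive by switching a single action. If some $a\in A^1$ has $(\cc^\pi)_a<0$, let $i=s(a)\in S_1$ and let $\pi'_1$ agree with $\pi_1$ except that $\pi'_1(i)=a$; set $\pi'=(\pi'_1,\pi_2)$. Every state other than $i$ keeps its $\pi$-action, so by Lemma~\ref{L-cpi} the vector $(\cc^\pi)_{\pi'}$ vanishes off coordinate $i$ and equals $(\cc^\pi)_a<0$ there. The strict part of the monotonicity tool yields $(\vv_{\pi'})_i<(\vv_\pi)_i$, so $\vv_{\pi_1,\pi_2}\not\le\vv_{\pi'_1,\pi_2}$ and $\pi_1$ is not an optimal counter strategy against $\pi_2$; thus $\pi$ is not optimal. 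The case of some $a\in A^2$ with $(\cc^\pi)_a>0$ is symmetric: the single switch at $i=s(a)\in S_2$ makes $(\cc^\pi)_{\pi'}$ strictly positive at $i$ and zero elsewhere, forcing $(\vv_{\pi'})_i>(\vv_\pi)_i$ and violating optimality of $\pi_2$ against $\pi_1$.

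The delicate point I would watch most carefully is the strictness in the necessity direction: a merely non-positive perturbation $(\cc^\pi)_{\pi'}\le\zz$ need not lower the value at every coordinate, so I must invoke the strictly positive diagonal of $(I-\gamma P_{\pi'})^{-1}$ to guarantee a strict drop precisely at the switched state $i$. Everything else is bookkeeping: Lemma~\ref{L-cpi} keeps the modified cost at all unswitched states exactly zero, which isolates the single perturbed coordinate, and the only remaining care is to keep straight the two opposite inequality directions dictated by player~$1$ minimizing over $A^1$ and player~$2$ maximizing over $A^2$.
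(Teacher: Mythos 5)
Your proposal is correct and follows essentially the same route as the paper: the same use of Lemmas~\ref{L-cpi}, \ref{L-modified-2} and~\ref{L-value-vector-2} for sufficiency, and the same single-action switch with the strictly positive diagonal of $(I-\gamma P_{\pi'})^{-1}$ (Lemma~\ref{L-gP-2}) for necessity. Your only refinement is pinpointing the strict value drop at the switched coordinate~$i$ itself, where the paper is content to conclude $\vv^\pi_{\pi'_1,\pi_2}<\zz$ in its vector-inequality convention.
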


\begin{proof} Suppose that $(\cc^\pi)_{A^1}\ge \zz$ and $(\cc^\pi)_{A^2}\le \zz$.  Let $\pi=(\pi_1,\pi_2)$. We prove that $\pi_1$ is an optimal counter strategy against $\pi_2$.
By Lemma~\ref{L-cpi} we have
$(\cc^\pi)_{\pi_1}=\zz$,  $(\cc^\pi)_{\pi_2}=\zz$ and hence $\vv^\pi_{\pi_1,\pi_2}=\zz$. For every $\pi'_1\in \Pol_1$, we have $(\cc^\pi)_{\pi'_1}\ge \zz$, as $\pi'_1\subseteq A^1$, and hence $(\cc^\pi)_{\pi'_1,\pi_2}\ge \zz$. Thus clearly $\vv^\pi_{\pi'_1,\pi_2}\ge \zz = \vv^\pi_{\pi_1,\pi_2}$, and $\pi_1$ is indeed an optimal counter strategy against~$\pi_2$. The proof that $\pi_2$ is an optimal counter strategy against $\pi_1$ is analogous.

Suppose now that there is an action $a\in A_{i_0}$, where $i_0\in S_1$, such that $(\cc^\pi)_a<0$. (The case in which $i_0\in S_2$ and $(\cc^\pi)_a>0$ is analogous.) Again, let $\pi=(\pi_1,\pi_2)$. Let $\pi'_1\in \Pol_1$ be a policy such that $\pi'_1(i)=\pi_1(i)$, if $i\ne i_0$, and $\pi'_1(i_0)=a$. We then have $(\cc^\pi)_{\pi'_1}<\zz$ and $(\cc^\pi)_{\pi_2}=\zz$. Thus $\vv^\pi_{\pi'_1,\pi_2}<\zz$. (The strict inequality follows from Lemma~\ref{L-gP-2}. All entries of $(I-\gamma P_{\pi'_1,\pi_2})^{-1}$ are non-negative, and the entries on the diagonal are strictly positive.) Thus $\pi_1$ is \emph{not} an optimal counter strategy against~$\pi_2$.
\end{proof}

In the second part of the proof above, $\pi'_1$ is obtained from
$\pi_1$ by a \emph{profitable switch}. Profitable switches
are closely related to the pivoting steps performed by the simplex algorithm. They also
lie at the core of the strategy iteration algorithm whose analysis is the main focus of this paper.

\begin{definition}[\bf Flux vectors]\label{D-flux-2} For every strategy profile~$\pi$, let $\xx_\pi\in \RR^{1\times n}$ be a row vector such that~$(\xx_\pi)_i$, for every $i\in S$, is the sum of the discounted costs, over all states, when the cost of action~$\pi(i)$ is~$1$, while the cost of all other actions is~$0$, and when the players use strategy profile~$\pi$.
% We also let $\xx^\pi \in \RR^{1\times m}$ be a row vector such that $(\xx^\pi)_\pi=\xx_\pi$ and such that all its other entries are~$0$.
\end{definition}

%Note that if $a\not\in \pi$, then $(\xx^\pi)_a=0$. Thus, the elements of $\xx_\pi$ are the only non-zero elements in $\xx^\pi$.

We let $\ee=(1,1,\ldots,1)^T\in \RR^n$ be an all one vector.
Using Lemma~\ref{L-value-vector-2}, we easily get

\begin{lemma}\label{L-flux-2} For every strategy profile~$\pi$, we have
$$\xx_\pi \;=\; \ee^T (I-\gamma P_\pi)^{-1}.$$
\end{lemma}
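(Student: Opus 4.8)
The goal is to establish that $\xx_\pi = \ee^T (I-\gamma P_\pi)^{-1}$ for every strategy profile $\pi$. Reading Definition~\ref{D-flux-2}, the quantity $\xx_\pi$ is itself a value-vector-like object, but one obtained by running the same Markov chain $P_\pi$ with a very particular cost structure: we set the cost of $\pi(i)$ to $1$ and the cost of every other action to $0$, and then measure ``the sum of the discounted costs, over all states.'' The plan is to recognize that $(\xx_\pi)_i$ is precisely a discounted sum of visit probabilities to state $i$, aggregated over all possible starting states, and then to read off the closed form directly from the expansion of $(I-\gamma P_\pi)^{-1}$ supplied by Lemma~\ref{L-gP-2}.

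\textbf{Main approach.} First I would fix $\pi$ and expand $(\xx_\pi)_i$ according to the definition. Under the profile $\pi$ the process is a Markov chain with transition matrix $P_\pi$, and $(P_\pi^k)_{j,i}$ is the probability of being in state $i$ after exactly $k$ steps when starting from $j$ (this is exactly the interpretation already used in the proof of Lemma~\ref{L-value-vector-2}). With the stipulated cost structure, whenever the chain occupies state $i$ it incurs cost $1$ (because the action played from $i$ is $\pi(i)$, whose cost is set to $1$) and incurs $0$ everywhere else. Hence the discounted cost accrued by a game that starts at state $j$ equals $\sum_{k\ge 0}\gamma^k (P_\pi^k)_{j,i}$. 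Summing ``over all states,'' i.e.\ over all starting states $j$, gives
$$(\xx_\pi)_i \;=\; \sum_{j\in S}\sum_{k\ge 0}\gamma^k (P_\pi^k)_{j,i} \;=\; \sum_{k\ge 0} \bigl(\ee^T (\gamma P_\pi)^k\bigr)_i \;=\; \Bigl(\ee^T \sum_{k\ge 0}(\gamma P_\pi)^k\Bigr)_i.$$
Applying the series expansion of Lemma~\ref{L-gP-2}, the inner sum is exactly $(I-\gamma P_\pi)^{-1}$, so $(\xx_\pi)_i = \bigl(\ee^T (I-\gamma P_\pi)^{-1}\bigr)_i$ for every $i$, which is the claimed identity in coordinates.

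\textbf{Where the care is needed.} The only genuinely delicate point is interpreting the verbal definition precisely: one must confirm that ``the sum of the discounted costs, over all states'' means summing the starting-state value over all $j\in S$ (which produces the left multiplication by $\ee^T$), rather than something else, and that ``the cost of action $\pi(i)$ is $1$'' correctly localizes the unit cost to visits of state $i$ under the profile $\pi$. Once that reading is pinned down, the computation is essentially the transpose of the one in Lemma~\ref{L-value-vector-2}: instead of right-multiplying $(I-\gamma P_\pi)^{-1}$ by a cost vector $\cc_\pi$, we left-multiply by $\ee^T$ and use the indicator cost vector $\ee_i$ (the $i$th standard basis vector) to pick out column $i$. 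Interchanging the two summations over $j$ and $k$ is justified by nonnegativity and absolute convergence of the geometric series, guaranteed by $0<\gamma<1$ and the row-stochasticity of $P_\pi$, so no subtlety arises there. I would therefore present the proof as a single short display deriving the coordinatewise formula and then remark that it holds simultaneously for all $i$, yielding the row-vector identity $\xx_\pi = \ee^T (I-\gamma P_\pi)^{-1}$.
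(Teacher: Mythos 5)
Your proof is correct and matches the paper's intended argument: the paper dispatches this lemma with ``Using Lemma~\ref{L-value-vector-2}, we easily get,'' and your computation---instantiating the value-vector formula with the indicator cost vector $\ee_i$ and summing over starting states via $\ee^T$, using the series expansion from Lemma~\ref{L-gP-2}---is precisely the omitted calculation. Your attention to parsing the verbal definition and to the summation interchange is sound but adds nothing beyond what the paper's one-line proof presupposes.
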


It is in fact possible to view Lemma~\ref{L-flux-2} as the definition of $\xx_\pi$. The meaning of the flux vectors given in Definition~\ref{D-flux-2} is not used in the sequel. (The flux vectors are intimately related to the \emph{dual} linear program formulation of MDPs.)

\begin{lemma}\label{L-x-2} For every strategy profile~$\pi$, we have
$$\xx_\pi \ee = \frac{n}{1-\gamma}.$$
% $\ee_m^T \xx^\pi = \ee_n^T \xx_\pi = \frac{n}{1-\gamma}$.
\end{lemma}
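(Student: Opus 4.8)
The plan is to use the explicit formula $\xx_\pi = \ee^T (I-\gamma P_\pi)^{-1}$ from Lemma~\ref{L-flux-2}, and then compute $\xx_\pi \ee = \ee^T (I-\gamma P_\pi)^{-1} \ee$ directly. The key observation is that $P_\pi$ is a row-stochastic matrix, so $P_\pi \ee = \ee$; that is, $\ee$ is a right eigenvector of $P_\pi$ with eigenvalue~$1$.

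First I would show that $(I-\gamma P_\pi)^{-1}\ee = \frac{1}{1-\gamma}\ee$. Using the Neumann series from Lemma~\ref{L-gP-2}, namely $(I-\gamma P_\pi)^{-1} = \sum_{k\ge 0}(\gamma P_\pi)^k$, I apply this to $\ee$ and use $P_\pi^k \ee = \ee$ (which follows by induction from $P_\pi \ee = \ee$). This gives
\[
(I-\gamma P_\pi)^{-1}\ee \;=\; \sum_{k\ge 0}\gamma^k P_\pi^k \ee \;=\; \sum_{k\ge 0}\gamma^k \ee \;=\; \frac{1}{1-\gamma}\,\ee,
\]
where the geometric series converges since $0<\gamma<1$. (Equivalently, one can verify $(I-\gamma P_\pi)\,\tfrac{1}{1-\gamma}\ee = \tfrac{1}{1-\gamma}(\ee - \gamma\ee) = \ee$ and invoke invertibility.)

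Then I would conclude by substituting:
\[
\xx_\pi \ee \;=\; \ee^T (I-\gamma P_\pi)^{-1}\ee \;=\; \ee^T \cdot \frac{1}{1-\gamma}\ee \;=\; \frac{1}{1-\gamma}\,\ee^T\ee \;=\; \frac{n}{1-\gamma},
\]
since $\ee^T \ee = n$.

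There is no real obstacle here; the proof is short and the only ingredient beyond Lemma~\ref{L-flux-2} is the stochasticity of $P_\pi$ (which is noted right after the definition of $P_\pi$ in the excerpt) together with the convergence of the geometric series guaranteed by $0<\gamma<1$. The one point that deserves a clean justification is the interchange of the infinite sum with the action on $\ee$, but this is immediate because each term $\gamma^k P_\pi^k \ee = \gamma^k \ee$ is explicit and the resulting scalar series $\sum_k \gamma^k$ converges absolutely.
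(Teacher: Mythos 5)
Your proof is correct and follows essentially the same route as the paper: both expand $(I-\gamma P_\pi)^{-1}$ via the Neumann series of Lemma~\ref{L-gP-2} and use row-stochasticity of $P_\pi$ (the paper states this as $\ee^T(P_\pi)^k\ee = n$, which is exactly your $P_\pi^k\ee = \ee$ combined with $\ee^T\ee = n$). Your intermediate observation $(I-\gamma P_\pi)^{-1}\ee = \frac{1}{1-\gamma}\ee$ is a harmless refactoring of the same computation.
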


\begin{proof}
By Lemma~\ref{L-flux-2}, Lemma~\ref{L-gP-2}, and the fact that $\ee^T(P_\pi)^k \ee=n$, for every $k\ge 0$, we have:
\[\xx_\pi \ee = \ee^T (I-\gamma P_\pi)^{-1} \ee = \sum_{k\ge 0} \ee^T(\gamma P_\pi)^k \ee = n \sum_{k\ge 0} \gamma^k = \frac{n}{1-\gamma}.\qedhere\]
\end{proof}

\begin{lemma}\label{L-cx-2}  For every strategy profile~$\pi$, we have
$$\ee^T \vv_\pi = \xx_\pi \cc_\pi.$$
\end{lemma}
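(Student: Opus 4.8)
The plan is to simply compose the two closed-form expressions already established for the value vector and the flux vector. By Lemma~\ref{L-value-vector-2} the value vector admits the explicit form $\vv_\pi=(I-\gamma P_\pi)^{-1}\cc_\pi$, and by Lemma~\ref{L-flux-2} the flux vector is $\xx_\pi=\ee^T(I-\gamma P_\pi)^{-1}$. The claimed identity $\ee^T\vv_\pi=\xx_\pi\cc_\pi$ is then nothing more than the statement that one may group the triple matrix product $\ee^T(I-\gamma P_\pi)^{-1}\cc_\pi$ in two different ways.

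Concretely, I would first left-multiply the formula of Lemma~\ref{L-value-vector-2} by the row vector $\ee^T$ to obtain $\ee^T\vv_\pi=\ee^T(I-\gamma P_\pi)^{-1}\cc_\pi$. I would then read off the leftmost factor $\ee^T(I-\gamma P_\pi)^{-1}$ as exactly $\xx_\pi$, invoking Lemma~\ref{L-flux-2}, which yields $\ee^T\vv_\pi=\xx_\pi\cc_\pi$ and finishes the argument. The only facts used are the invertibility of $(I-\gamma P_\pi)$ guaranteed by Lemma~\ref{L-gP-2} (so that the expressions are well defined) and the associativity of matrix multiplication.

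I expect no genuine obstacle here: the statement is a one-line substitution, and the sole ``insight'' required is recognizing that $\xx_\pi$ and $\vv_\pi$ share the common factor $(I-\gamma P_\pi)^{-1}$, sitting on opposite sides of it. The lemma should be read as the natural bookkeeping identity linking the expected total discounted cost summed over all starting states (the left-hand side) with the flux-weighted cost of the chosen actions (the right-hand side), a relationship that will presumably be exploited later in connection with the dual LP interpretation mentioned after Lemma~\ref{L-flux-2}.
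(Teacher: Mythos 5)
Your proof is correct and is exactly the paper's own argument: left-multiply the identity of Lemma~\ref{L-value-vector-2} by $\ee^T$ and identify the factor $\ee^T(I-\gamma P_\pi)^{-1}$ as $\xx_\pi$ via Lemma~\ref{L-flux-2}. Nothing further is needed.
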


\begin{proof} By Lemma~\ref{L-value-vector-2} and then Lemma~\ref{L-flux-2}, we get
$\; \ee^T \vv_\pi \;=\; \ee^T (I-\gamma P_\pi)^{-1} \cc_\pi \;=\; \xx_\pi \cc_\pi.$
\end{proof}

\begin{lemma}\label{L-xs-2}  For every strategy profile~$\pi$, we have
$$\ee^T(\vv_{\pi'}-\vv_{\pi}) = \xx_{\pi'} (\cc^{\pi})_{\pi'}.$$
\end{lemma}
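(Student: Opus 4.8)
The plan is to recognize the left-hand side as (the $\ee^T$-weighted sum of) a \emph{modified} value vector, and then apply the flux identity. The chain is essentially a composition of three lemmas already proved, so I expect the argument to be short and purely algebraic.

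First I would invoke Lemma~\ref{L-modified-2} to rewrite the difference of value vectors as a single modified value vector, namely $\vv_{\pi'}-\vv_\pi = \vv^\pi_{\pi'}$, so that
\[
\ee^T(\vv_{\pi'}-\vv_\pi) \;=\; \ee^T \vv^\pi_{\pi'}.
\]
Next I would unfold the definition of the modified value vector $\vv^\pi_{\pi'}$: it is the value vector of $\pi'$ with respect to the modified cost $\cc^\pi$, so Lemma~\ref{L-value-vector-2} (applied with cost vector $\cc^\pi$ in place of $\cc$) gives $\vv^\pi_{\pi'} = (I-\gamma P_{\pi'})^{-1}(\cc^\pi)_{\pi'}$.

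Finally I would left-multiply by $\ee^T$ and use Lemma~\ref{L-flux-2}, which identifies $\ee^T(I-\gamma P_{\pi'})^{-1} = \xx_{\pi'}$, to conclude
\[
\ee^T \vv^\pi_{\pi'} \;=\; \ee^T (I-\gamma P_{\pi'})^{-1}(\cc^\pi)_{\pi'} \;=\; \xx_{\pi'}\,(\cc^\pi)_{\pi'}.
\]
This is exactly the modified-cost analogue of Lemma~\ref{L-cx-2} (which reads $\ee^T\vv_\pi = \xx_\pi \cc_\pi$), so one may alternatively cite that lemma directly with $\cc$ replaced by $\cc^\pi$ and $\pi$ replaced by $\pi'$.

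There is no genuine obstacle here: every step is an immediate substitution, and the only ``insight'' required is spotting that $\vv_{\pi'}-\vv_\pi$ is precisely the modified value vector $\vv^\pi_{\pi'}$ supplied by Lemma~\ref{L-modified-2}. The one point to state carefully is that Lemma~\ref{L-value-vector-2} and Lemma~\ref{L-flux-2} hold verbatim for the modified cost vector $\cc^\pi$, since their proofs only use the structure $(I-\gamma P_{\pi'})^{-1}$ and are independent of which cost vector is selected.
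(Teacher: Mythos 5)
Your proof is correct and takes essentially the same route as the paper: the paper's proof likewise rewrites $\vv_{\pi'}-\vv_\pi$ as $\vv^\pi_{\pi'}$ via Lemma~\ref{L-modified-2} and then applies Lemma~\ref{L-cx-2} with the modified cost vector $\cc^\pi$ in place of $\cc$, which is exactly the shortcut you mention at the end. Your explicit unfolding through Lemmas~\ref{L-value-vector-2} and~\ref{L-flux-2} merely re-derives Lemma~\ref{L-cx-2} in the modified-cost setting, so the two arguments coincide in substance.
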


\begin{proof}
By Lemma~\ref{L-modified-2} and then Lemma~\ref{L-cx-2}, we have
% \[\ee^T(\vv_{\pi'}-\vv_\pi) \;=\; \ee^T \vv^\pi_{\pi'} \;=\; \xx_{\pi'}(\cc^\pi)_{\pi'}.\qedhere\]
$\;\ee^T(\vv_{\pi'}-\vv_\pi) \;=\; \ee^T \vv^\pi_{\pi'} \;=\; \xx_{\pi'}(\cc^\pi)_{\pi'}.$
\end{proof}

\section{Value iteration}\label{S-value}

If $\xx\in \RR^m$ and $B\subseteq [m]$, we let $\min_B \xx = \min_{j\in B} \xx_j$, and similarly $\max_B \xx= \max_{j\in B} \xx_j$.
We also let $\argmin_B \xx = \argmin_{j\in B} \xx_j$ and
$\argmax_B \xx = \argmax_{j\in B} \xx_j$.

\begin{definition}[\bf Value iteration operator]
The \emph{value iteration} operator $\TT:\RR^n \to \RR^n$ is defined as follows:
% $$(\TT \vv)_i \;=\; \min\; \cc_{A_i}+\gamma P_{A_i} \vv \quad,\quad i\in S.$$
$$\textstyle (\TT \vv)_i \;=\;
\begin{cases}
\min_{A_i} \cc+\gamma P \vv\;, & \text{if $i\in S_1$,} \\
\max_{A_i} \cc+\gamma P \vv\;, & \text{if $i\in S_2$.} \\
\end{cases}
$$
%
% If $\pi\in\Pol$ is a strategy profile, we let $\TT_\pi:\RR^n\to \RR^n$ be the following operator:
% $$\TT_\pi\vv \;=\; \cc_{\pi} + \gamma P_{\pi}\vv.$$
\end{definition}

% $$(\TT_\pi)_i \;=\; \cc_{\pi(i)} + \gamma P_{\pi(i)} \vv\quad,\quad i\in S.$$

% Note that $\TT_\pi$ is simply the value iteration operator in the degenerate game in which only the actions of~$\pi$ are available.
% In particular, Lemma~\ref{L-contract-2} applies also to~$\TT_\pi$.

The operator $\TT$ is a contraction with Lipschitz constant $\gamma$.
\begin{lemma}\label{L-contract-2} For every $\uu,\vv\in \RR^n$ we have $\|\TT\uu-\TT\vv\|_\infty \le \gamma\, \|\uu-\vv\|_\infty$.
\end{lemma}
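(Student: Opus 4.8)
The plan is to show that the value iteration operator $\TT$ is a $\gamma$-contraction in the $\ell_\infty$ norm by reducing to the elementary fact that a $\min$ (or $\max$) of a collection of expressions, each of which depends $\gamma$-contractively on $\vv$, is itself $\gamma$-contractive. For each action $a\in A_i$, the map $\vv\mapsto \cc_a+\gamma (P\vv)_a$ is affine with ``slope'' $\gamma P_{a,\cdot}$, and since $P_{a,\cdot}$ is a probability distribution (its entries are nonnegative and sum to~$1$), we have $|\gamma(P\vv)_a-\gamma(P\uu)_a| = \gamma\,|\sum_j P_{a,j}(\vv_j-\uu_j)| \le \gamma\,\|\uu-\vv\|_\infty$. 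This per-action Lipschitz estimate is the heart of the argument.

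First I would fix $\uu,\vv\in\RR^n$ and an arbitrary state $i\in S$, and bound $|(\TT\uu)_i-(\TT\vv)_i|$. I would treat the case $i\in S_1$ (the $\min$ case) in detail; the $i\in S_2$ ($\max$) case is symmetric. Let $a^*=\argmin_{A_i}(\cc+\gamma P\uu)$ achieve the minimum defining $(\TT\uu)_i$. Since $(\TT\vv)_i$ is a minimum over $A_i$, it is no larger than the value of the same expression evaluated at $a^*$, so
\[
(\TT\vv)_i - (\TT\uu)_i \;\le\; \bigl(\cc_{a^*}+\gamma(P\vv)_{a^*}\bigr) - \bigl(\cc_{a^*}+\gamma(P\uu)_{a^*}\bigr) \;=\; \gamma(P(\vv-\uu))_{a^*} \;\le\; \gamma\,\|\uu-\vv\|_\infty,
\]
using the per-action estimate above. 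By symmetry (swapping the roles of $\uu$ and $\vv$ and taking the minimizer for $\vv$) one obtains $(\TT\uu)_i-(\TT\vv)_i\le \gamma\,\|\uu-\vv\|_\infty$. Combining the two inequalities yields $|(\TT\uu)_i-(\TT\vv)_i|\le \gamma\,\|\uu-\vv\|_\infty$.

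Finally, since this bound holds uniformly for every coordinate $i\in S$ (states in $S_2$ being handled by the identical $\max$ argument), I would take the maximum over $i$ to conclude $\|\TT\uu-\TT\vv\|_\infty\le \gamma\,\|\uu-\vv\|_\infty$, which is the claim.

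There is no real obstacle here; the only point requiring a little care is the standard ``min/max of contractions'' trick, namely that one must evaluate the $\vv$-minimum at the $\uu$-optimal action (rather than trying to compare minimizers directly), and then argue symmetrically. The remaining content is simply the observation that stochasticity of the rows of $P$ turns the slope $\gamma P_{a,\cdot}$ into a convex combination, so that multiplication by it cannot expand the $\ell_\infty$ norm by more than the factor~$\gamma$.
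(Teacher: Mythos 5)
Your proof is correct and follows essentially the same route as the paper's: evaluate the minimum defining one of $(\TT\uu)_i,(\TT\vv)_i$ at the action that is optimal for the other vector, use the row-stochasticity of $P$ to bound the resulting difference by $\gamma\,\|\uu-\vv\|_\infty$, and handle the remaining cases (the reverse inequality and $i\in S_2$) by symmetry. The paper's proof is the mirror image of yours (it fixes the sign $(\TT\uu)_i\ge(\TT\vv)_i$ and plugs in the $\vv$-optimal action $b$), but this is the identical ``min/max of contractions'' argument.
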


\begin{proof} Assume that $i\in S_1$ and that $(\TT\uu)_i\ge (\TT\vv)_i$. (The other cases are analogous.)
Let $a=\argmin_{A_i} c + \gamma P\uu$ and $b = \argmin_{A_i} c+\gamma P\vv$. Then,
$$\begin{array}{lll}
(\TT\uu-\TT\vv)_i & = & (\cc_a+\gamma P_{a}\uu) - (\cc_b + \gamma P_b\vv) \\
                  & \le & (\cc_b+\gamma P_{b}\uu) - (\cc_b + \gamma P_b\vv) \\
                  & = & \gamma P_b (\uu-\vv) \\
                  & \le & \gamma\, \|\uu-\vv\|_\infty.
\end{array}$$
The last inequality follows from the fact that the elements in $P_b$ are non-negative and sum-up to~$1$.
\end{proof}

Banach fixed point theorem now implies that the value iteration operator $\TT$ has a unique fixed point.

\begin{corollary}\label{L-fixed-2}
There is a unique vector $\vv^*\in \RR^n$ such that $\TT\vv^*=\vv^*$.
\end{corollary}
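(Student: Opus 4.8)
The plan is to recognize Corollary~\ref{L-fixed-2} as a direct instance of the Banach fixed point theorem, so almost all of the work has already been done in Lemma~\ref{L-contract-2}. The ambient space $\RR^n$, equipped with the metric induced by $\|\cdot\|_\infty$, is complete, and Lemma~\ref{L-contract-2} states precisely that $\TT$ is a contraction with constant $\gamma<1$. Hence the contraction mapping principle yields both the existence and the uniqueness of a fixed point $\vv^*$. If one wishes to keep the argument self-contained rather than quoting Banach, I would prove the two halves separately as follows.

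For \emph{uniqueness}, suppose $\TT\uu=\uu$ and $\TT\vv=\vv$. Then Lemma~\ref{L-contract-2} gives $\|\uu-\vv\|_\infty = \|\TT\uu-\TT\vv\|_\infty \le \gamma\,\|\uu-\vv\|_\infty$. Since $0<\gamma<1$, this forces $\|\uu-\vv\|_\infty=0$, that is, $\uu=\vv$.

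For \emph{existence}, I would fix an arbitrary starting vector $\vv_0\in\RR^n$ and define the iterates $\vv_{k+1}=\TT\vv_k$. Repeated application of the contraction bound gives $\|\vv_{k+1}-\vv_k\|_\infty \le \gamma^k\,\|\vv_1-\vv_0\|_\infty$, and summing the resulting geometric series bounds $\|\vv_{k+\ell}-\vv_k\|_\infty$ by a quantity tending to $0$; thus $(\vv_k)$ is a Cauchy sequence, and by completeness of $\RR^n$ it converges to some limit $\vv^*$. Finally, $\TT$ is continuous (indeed Lipschitz, by Lemma~\ref{L-contract-2}), so passing to the limit in $\vv_{k+1}=\TT\vv_k$ yields $\vv^*=\TT\vv^*$.

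There is no substantial obstacle here: the only points demanding any care are that passing the limit through $\TT$ is justified by the continuity that the contraction property already supplies, and that the geometric summation genuinely certifies the Cauchy property. Everything else is routine, and the entire statement follows at once from Lemma~\ref{L-contract-2} together with the completeness of $(\RR^n,\|\cdot\|_\infty)$.
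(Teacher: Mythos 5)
Your proposal is correct and matches the paper exactly: the paper also deduces Corollary~\ref{L-fixed-2} directly from Lemma~\ref{L-contract-2} via the Banach fixed point theorem on the complete space $(\RR^n,\|\cdot\|_\infty)$. Your optional self-contained unfolding (uniqueness from the contraction inequality, existence via the Cauchy iterates) is just the standard proof of Banach's theorem and is sound.
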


We next define the \emph{strategy extraction} operators that play an important role in this section, and the central role in the next section.
% } that play the central role in the next section, as they are also used here.

\begin{definition}[\bf Strategy extraction operators]
The \emph{strategy extraction} operators $\PP_1:\RR^n \to \Pol_1$
and $\PP_2:\RR^n \to \Pol_2$ and $\PP:\RR^n \to \Pol$ are
defined as follows:
\belowdisplayskip = 0pt
$$\begin{array}{ccc}
(\PP_1\vv)(i) &=& \argmin_{A_i} \cc+\gamma P\vv \quad,\quad i\in S_1, \\[2pt]
 (\PP_2\vv)(i) &=& \argmax_{A_i} \cc+\gamma P\vv \quad,\quad i\in S_2.
\end{array}$$
and
 $$\PP\vv \;=\; (\PP_1\vv,\PP_2\vv).$$
% $$(\PP\vv)_i \;=\; \begin{cases}
% (\PP\vv)_i\;, & \text{if $i\in S_1$,}\\
% (\PP\vv)_i\;, & \text{if $i\in S_1$,}\\
%
% $$\textstyle (\PP \vv)_i \;=\;
% \begin{cases}
% \argmin_{A_i} \cc+\gamma P \vv\;, & \text{if $i\in S_1$,} \\
% \argmax_{A_i} \cc+\gamma P \vv\;, & \text{if $i\in S_2$.} \\
% \end{cases}
% $$
\end{definition}

The following relation between the value iteration and strategy extraction operator is immediate.

\begin{lemma}\label{L-VP} For every $\vv\in\RR^n$ we have $\TT\vv = (\cc + \gamma P \vv)_\pi$, where $\pi=\PP\vv$.
\end{lemma}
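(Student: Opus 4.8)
The plan is to verify the identity coordinate by coordinate, since both sides lie in $\RR^n$. Fix $\vv\in\RR^n$ and set $\pi=\PP\vv=(\PP_1\vv,\PP_2\vv)$. I would first pin down the indexing conventions: the vector $\cc+\gamma P\vv$ lives in $\RR^m$ (one entry per action), and for a strategy profile $\pi$ viewed as a map $S\to A$, the row-selection $(\cc+\gamma P\vv)_\pi\in\RR^n$ has $i$-th coordinate $(\cc+\gamma P\vv)_{\pi(i)}=\cc_{\pi(i)}+\gamma P_{\pi(i)}\vv$, in keeping with the convention already used for $P_\pi$ and $\cc_\pi$.

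The argument then splits into two cases according to which player controls state $i$. For $i\in S_1$, the definition of $\PP_1$ gives $\pi(i)=(\PP_1\vv)(i)=\argmin_{A_i}(\cc+\gamma P\vv)$, so evaluating $\cc+\gamma P\vv$ at this action returns exactly the minimum, i.e. $(\cc+\gamma P\vv)_{\pi(i)}=\min_{A_i}(\cc+\gamma P\vv)=(\TT\vv)_i$. The case $i\in S_2$ is symmetric, with $\argmax$ and $\max$ replacing $\argmin$ and $\min$. Since the two sides agree in every coordinate, they are equal.

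There is essentially no obstacle here: the only fact being invoked is that the value of an expression at a minimizing (resp. maximizing) argument equals the minimum (resp. maximum) value, which is the defining property of $\argmin$ and $\argmax$. The one point worth noting is that even when the minimizer (or maximizer) over $A_i$ is not unique, any action selected by $\PP_1$ (or $\PP_2$) yields the same value, namely the min (max), so the right-hand side is well defined and the identity holds independently of how ties are broken. Beyond this, the only care required is purely notational---keeping straight that $\cc+\gamma P\vv$ is an $m$-vector indexed by actions while both $\TT\vv$ and $(\cc+\gamma P\vv)_\pi$ are $n$-vectors indexed by states.
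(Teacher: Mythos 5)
Your proof is correct and matches the paper's reasoning: the paper states this lemma as immediate without proof, and your coordinate-wise unfolding of the definitions of $\TT$, $\PP_1$, and $\PP_2$ (including the observation that ties do not affect the value) is precisely the argument being taken for granted.
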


The following simple lemma provides an interesting relation between the strategy extraction operator and modified cost vectors.

\begin{lemma}\label{L-pp} For every strategy profile $\pi$ we have
\[\begin{array}{c}
(\PP_1\vv_\pi)(i) \;=\; \textstyle\argmin_{A_i} \cc^\pi\quad,\quad i\in S_1,\\
(\PP_2\vv_\pi)(i) \;=\; \textstyle\argmax_{A_i} \cc^\pi\quad,\quad i\in S_2
.
\end{array}\]
\end{lemma}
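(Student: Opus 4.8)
The plan is to unfold both the definition of the strategy extraction operators and the definition of the modified cost vector, and to observe that on the relevant set of actions the two expressions being optimized differ only by an additive constant. Concretely, I would start by fixing a state $i\in S$ and an action $a\in A_i$, so that $s(a)=i$, and compute the $a$-th entry of $\cc^\pi = \cc-(J-\gamma P)\vv_\pi$. The crucial structural fact is that the row $J_a$ of the matrix $J$ is the indicator of $s(a)$: since $J_{a,i}=1$ iff $a\in A_i$, and each action belongs to exactly one $A_i$, we have $J_a\vv_\pi = (\vv_\pi)_{s(a)} = (\vv_\pi)_i$ for every $a\in A_i$.

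With this in hand, the key computation is immediate. For every $a\in A_i$,
\[
(\cc^\pi)_a \;=\; \cc_a - J_a\vv_\pi + \gamma P_a\vv_\pi \;=\; (\cc+\gamma P\vv_\pi)_a - (\vv_\pi)_i.
\]
Thus, as $a$ ranges over $A_i$, the modified cost $(\cc^\pi)_a$ equals the one-step lookahead cost $(\cc+\gamma P\vv_\pi)_a$ shifted by the quantity $(\vv_\pi)_i$, which depends on $i$ but is \emph{constant} over all $a\in A_i$. Since both $\argmin$ and $\argmax$ over a fixed index set are invariant under adding a common constant to all the competing values, we conclude that $\argmin_{A_i}\cc^\pi = \argmin_{A_i}(\cc+\gamma P\vv_\pi)$ and $\argmax_{A_i}\cc^\pi = \argmax_{A_i}(\cc+\gamma P\vv_\pi)$. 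Comparing with the definition of $\PP_1$ and $\PP_2$ then yields $(\PP_1\vv_\pi)(i)=\argmin_{A_i}\cc^\pi$ for $i\in S_1$ and $(\PP_2\vv_\pi)(i)=\argmax_{A_i}\cc^\pi$ for $i\in S_2$, as required.

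I expect no serious obstacle here; the lemma is essentially a reformulation once the potential-transformation structure is made explicit. The only point requiring a moment of care is the identification $J_a\vv_\pi=(\vv_\pi)_i$, i.e.\ recognizing that the subtracted term $(J-\gamma P)\vv_\pi$ contributes the source-state potential $(\vv_\pi)_i$ uniformly across $A_i$; this is exactly the feature that makes the $\argmin$/$\argmax$ over $A_i$ insensitive to the transformation. I would emphasize that this constant shift is precisely the mechanism by which the modified costs encode the improvement test, which is why this lemma will be useful when analyzing the strategy iteration algorithm in the next section.
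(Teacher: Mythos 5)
Your proof is correct and is essentially identical to the paper's: both compute $(\cc^\pi)_a = (\cc+\gamma P\vv_\pi)_a - (\vv_\pi)_i$ for $a\in A_i$ and conclude that the per-state constant shift $(\vv_\pi)_i$ preserves the $\argmin$/$\argmax$ over $A_i$. Your explicit justification that $J_a\vv_\pi=(\vv_\pi)_{s(a)}$ merely makes precise a step the paper leaves implicit.
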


\begin{proof} Let $\vv=\vv_\pi$. If $a\in A_i$ then,
$$(\cc^\pi)_a \;=\; \cc_a - (\vv_i - \gamma P_a\vv) \;=\; (\cc+\gamma P\vv)_a-\vv_i.$$
Thus, if $a,a'\in A_i$, then $(\cc+\gamma P\vv)_a\le (\cc+\gamma P\vv)_{a'}$ if and only if
$(\cc^\pi)_a \le (\cc^\pi)_{a'}$.
\end{proof}

The following lemma supplies a simple proof of Theorem~\ref{T-opt}.
(This is, in fact, the original proof given by Shapley \cite{Shapley53}.)

\begin{lemma} Let $\vv^*\in \RR^n$ be the unique fixed point of $\TT$ and let $\pi=\PP\vv^*$. Then, $\pi$ is an optimal strategy profile.
\end{lemma}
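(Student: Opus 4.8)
The plan is to show that the fixed point $\vv^*$ of $\TT$ is exactly the value vector $\vv_\pi$ of the extracted profile $\pi = \PP\vv^*$, and then invoke the optimality condition of Lemma~\ref{L-opt-2} to conclude that $\pi$ is optimal. First I would observe that since $\pi = \PP\vv^*$, Lemma~\ref{L-VP} gives $\TT\vv^* = (\cc + \gamma P\vv^*)_\pi$. But $\vv^*$ is a fixed point, so $\vv^* = \TT\vv^* = (\cc + \gamma P\vv^*)_\pi = \cc_\pi + \gamma P_\pi \vv^*$. Rearranging yields $(I - \gamma P_\pi)\vv^* = \cc_\pi$, and since $(I-\gamma P_\pi)$ is invertible by Lemma~\ref{L-gP-2}, we get $\vv^* = (I-\gamma P_\pi)^{-1}\cc_\pi = \vv_\pi$ by Lemma~\ref{L-value-vector-2}. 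Thus the fixed point is the value vector of $\pi$.

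Next I would translate the fixed-point property of $\vv^*$ into the sign conditions on the modified cost vector $\cc^\pi$. Because $\pi = \PP\vv^*= \PP\vv_\pi$, Lemma~\ref{L-pp} tells us that for each $i\in S_1$ the action $\pi_1(i)$ attains $\min_{A_i}\cc^\pi$, and for each $i\in S_2$ the action $\pi_2(i)$ attains $\max_{A_i}\cc^\pi$. By Lemma~\ref{L-cpi} we know $(\cc^\pi)_\pi = \zz$, so for every $i\in S_1$ the minimum of $\cc^\pi$ over $A_i$ equals $0$, whence $(\cc^\pi)_a \ge 0$ for all $a\in A_i$; taking the union over $i\in S_1$ gives $(\cc^\pi)_{A^1}\ge\zz$. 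Symmetrically, for every $i\in S_2$ the maximum over $A_i$ is $0$, so $(\cc^\pi)_a\le 0$ for all $a\in A_i$, giving $(\cc^\pi)_{A^2}\le\zz$.

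Finally, these are precisely the two hypotheses of the optimality condition in Lemma~\ref{L-opt-2}, so $\pi$ is an optimal strategy profile, completing the proof. I do not expect any single step to be a serious obstacle; the only subtle point is making sure the two facts are combined in the right order — one must first identify $\vv^*$ with $\vv_\pi$ so that Lemma~\ref{L-pp} (which is stated in terms of $\vv_\pi$ rather than an arbitrary fixed point) becomes applicable, and then use $(\cc^\pi)_\pi=\zz$ to pin the extremal values at exactly zero rather than merely knowing $\pi$ attains them. The argument is short because all the analytic content — contraction, invertibility, the potential-transformation identities — has already been packaged into the preceding lemmas.
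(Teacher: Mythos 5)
Your proposal is correct and follows essentially the same route as the paper's own proof: identify $\vv^*=\vv_\pi$ via Lemma~\ref{L-VP} and Lemma~\ref{L-value-vector-2}, then combine Lemma~\ref{L-pp} with $(\cc^\pi)_\pi=\zz$ from Lemma~\ref{L-cpi} to verify the sign conditions of the optimality criterion in Lemma~\ref{L-opt-2}. The only difference is expository detail (you spell out the invertibility step and the ``pin the extremum at zero'' point explicitly), which the paper leaves implicit.
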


\begin{proof} By Lemma~\ref{L-VP}, we get that $\vv^* = \TT\vv^* = \cc_\pi + \gamma P_\pi\vv^*$. By Lemma~\ref{L-value-vector-2} we get $\vv_\pi = \vv^*$.
We next show that $\pi$ satisfies the optimality condition of Lemma~\ref{L-opt-2}, and hence is an optimal strategy profile. Suppose that $i\in S_1$ and that $a\in A_i$.
By Lemma~\ref{L-pp}, we have $\pi(i)=(\PP_1\vv^*)(i) = \argmin_{A_i} \cc^\pi$. As $(\cc^\pi)_{\pi(i)}=0$, we get that $(c^\pi)_a\ge 0$. Similarly, if $i\in S_2$ and $a\in A_i$, we get that $(c^\pi)_a\le 0$.
\end{proof}

The \emph{value iteration} algorithm, given on the left-hand side of Figure~\ref{F-algorithms-2}, repeatedly applies the value iteration operator $\TT$ to an initial vector $\uu^0\in \RR^n$, generating a sequence of vectors $(\uu^k)_{k=0}^N$, where $\uu^{k+1}= \TT\uu^k$, until the difference between two successive vectors is small enough, i.e., $\|\uu^{k-1} - \uu^k\|_{\infty} < \epsilon$.
% Note that by setting $\epsilon=0$, we get an infinite sequence of value vectors which, as we show below, converges to the optimal value vector~$\vv^*$.

\begin{lemma}\label{L-VI-2} Let $(\uu^k)_{k=0}^N$ be the sequence of value vectors generated by a call $\ValueIteration(\uu^0,\epsilon)$, for some $\epsilon>0$. Let $\vv^*$ be the optimal value vector.
 Then, for every $0\le k\le N$ we have
$$\|\uu^k-\vv^*\|_\infty \;\le\; \gamma^k\, \| \uu^0-\vv^*\|_\infty.$$
\end{lemma}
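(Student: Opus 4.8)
The plan is to derive this directly from the contraction property of $\TT$ (Lemma~\ref{L-contract-2}) together with the fact that $\vv^*$ is the fixed point of $\TT$ (Corollary~\ref{L-fixed-2}). The only facts I need are: (i) the value iteration algorithm produces $\uu^{k+1}=\TT\uu^k$ by definition; (ii) $\TT\vv^*=\vv^*$; and (iii) $\TT$ shrinks $\|\cdot\|_\infty$ distances by a factor $\gamma$. The statement is then an immediate induction, so I expect no real obstacle here — this lemma is essentially a restatement of the standard Banach-iteration estimate specialized to our operator.

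First I would establish the one-step contraction toward $\vv^*$. Since $\uu^{k+1}=\TT\uu^k$ and $\vv^*=\TT\vv^*$, applying Lemma~\ref{L-contract-2} with $\uu=\uu^k$ and $\vv=\vv^*$ gives
\[
\|\uu^{k+1}-\vv^*\|_\infty \;=\; \|\TT\uu^k-\TT\vv^*\|_\infty \;\le\; \gamma\,\|\uu^k-\vv^*\|_\infty,
\]
valid for every $0\le k< N$.

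Next I would close the argument by induction on $k$. The base case $k=0$ is the trivial identity $\|\uu^0-\vv^*\|_\infty\le\gamma^0\,\|\uu^0-\vv^*\|_\infty$. For the inductive step, assuming $\|\uu^k-\vv^*\|_\infty\le\gamma^k\,\|\uu^0-\vv^*\|_\infty$, the one-step bound above yields
\[
\|\uu^{k+1}-\vv^*\|_\infty \;\le\; \gamma\,\|\uu^k-\vv^*\|_\infty \;\le\; \gamma\cdot\gamma^k\,\|\uu^0-\vv^*\|_\infty \;=\; \gamma^{k+1}\,\|\uu^0-\vv^*\|_\infty,
\]
which completes the induction and hence the proof. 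The only subtlety worth flagging is that the estimate holds uniformly in the starting vector $\uu^0$ and does not depend on the stopping parameter $\epsilon$; the parameter $\epsilon$ only governs when the algorithm halts (i.e.\ the value of $N$), not the per-iteration error decay, so it plays no role in this bound.
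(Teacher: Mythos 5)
Your proof is correct and follows exactly the paper's own argument: apply the contraction property (Lemma~\ref{L-contract-2}) together with $\TT\vv^*=\vv^*$ to obtain the one-step bound $\|\uu^k-\vv^*\|_\infty\le\gamma\,\|\uu^{k-1}-\vv^*\|_\infty$, then conclude by induction. Your closing remark about $\epsilon$ only affecting the halting index $N$ is accurate and consistent with the paper's usage.
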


\begin{proof} By Lemma~\ref{L-contract-2} and the fact that $\TT\vv^*=\vv^*$, we have
$$\|\uu^k-\vv^*\|_\infty \;=\; \|\TT\uu^{k-1}-\TT\vv^*\|_\infty \;\le\; \gamma\, \| \uu^{k-1}-\vv^*\|_\infty.$$
The claim follows easily by induction.
\end{proof}

It follows immediately from Lemma~\ref{L-VI-2}, that for any $\uu\in \RR^n$, the infinite sequence of vectors generated by the call $\ValueIteration(\uu^0,0)$ converges to the optimal value vector $\vv^*$. Also, for every $\epsilon>0$, the call $\ValueIteration(\uu^0,\epsilon)$ eventually terminates.

\begin{figure}[t]
\begin{center}
\parbox[t]{2.8in}{
\begin{function}[H]
\dontprintsemicolon

\SetKwRepeat{Repeat}{repeat}{until}

\BlankLine
$k \gets 0$\;
\Repeat{$\|\uu^{k-1} - \uu^k\|_{\infty} < \epsilon$}{
  \BlankLine
  $\uu^{k+1} \gets \TT \uu^k$\;
  $k \gets k+1$
  \BlankLine
}
% $\pi \gets \PP \uu^k$
\BlankLine
\Return $\uu^k$
\caption{\ValueIteration($\uu^0,\epsilon$)}
\end{function}
}
\hspace*{20pt}
\parbox[t]{3.2in}{
\begin{function}[H]
\dontprintsemicolon

\SetKwRepeat{Repeat}{repeat}{until}

\BlankLine
$k \gets 0$\;
\Repeat{$\sigma^{k-1} = \sigma^k$}{
% \Repeat{$\vv^{k-1} = \vv^k$}{
  \BlankLine
  $\tau^k = \tau_2(\sigma^k)$ \;
  $\vv^k \gets \vv_{\sigma^k,\tau^k}$\;
  $\sigma^{k+1} \gets \PP_1 \vv^k$ (if possible $\sigma^{k+1} \gets \sigma^k$) \;
  $k \gets k+1$
  \BlankLine
}
\BlankLine
\Return $\sigma^k$
\caption{\StrategyIteration($\sigma^0$)}
\end{function}
}
\end{center}
\vspace*{-9pt}
\caption{The $\ValueIteration$ and $\StrategyIteration$ algorithms.}
\label{F-algorithms-2}
\end{figure}

\section{Strategy iteration}
\label{sec-strat}

The \emph{strategy iteration} algorithm is given in the right-hand side
of Figure~\ref{F-algorithms-2}. It was first described for the MDP case by
Howard \cite{Howard60} and is called \emph{policy iteration} or \emph{Howard's
algorithm} in that context. It was described for 2-player
stochastic games by Rao {\em et al.} \cite{RCN}. (Their algorithm
% \footnote{Rao {\em et al.}
actually works on more general imperfect information games
for which it is a non-terminating approximation algorithm.)

The strategy iteration algorithm receives an initial strategy~$\sigma^0$ of player~$1$, and generates a sequence $\pi^k=(\sigma^k,\tau^k)$ of strategy profiles of the two players, ending with an optimal strategy profile.
%
% (To highlight the different treatment of strategies of player~$1$ and player~$2$, we use $\sigma^k$, instead of $\pi^k_1$, to denote strategies of player~$1$ generated by the algorithm, and use $\tau^k$ to denote strategies of player~$2$ generated by the algorithm.)
%
Each iteration of the algorithm receives a strategy $\sigma^k$ and produces an \emph{improved} strategy $\sigma^{k+1}$ as follows. The algorithm first computes an optimal counter-strategy $\tau^k=\tau_2(\sigma^k)$ for player~$2$ against~$\sigma^k$.
(We assume here that this can be done in strongly polynomial time.
One way of doing it is to apply the strategy iteration algorithm on a restricted game in which $\sigma^k$ is the only strategy available to player~$1$.)
Next, it \emph{evaluates} the strategy profile $\pi^k=(\sigma^k,\tau^k)$, by solving a system of linear equations, and obtains its value vector $\vv^k=\vv_{\pi^k}$. It then lets $\sigma^{k+1}=\PP_1 \vv_{\pi^k}$.
Ties are broken, if possible, in favor of actions that are in $\sigma^k$. (This is important, as termination is not guaranteed without this provision.)
The algorithm terminates when two consecutive strategies $\sigma^k$ and $\sigma^{k+1}$ are identical.

The step $\sigma^{k+1}=\PP_1 \vv_{\pi^k}$ is the main step of the strategy iteration algorithm. As we shall (implicitly) see below, $\sigma^{k+1}$ is obtained from~$\sigma^k$ by performing a collection of improving switches.

To prove the correctness of the \StrategyIteration\ algorithm we use the following lemma. (Note that $\pi^1$ in the lemma is obtained from $\pi^0$ using one iteration of the \StrategyIteration\ algorithm.)

\begin{lemma}\label{L-SI-1} Let $\sigma^0\in \Pol_1$, $\pi^0=(\sigma^0,\tau_2(\sigma^0))$ and $\sigma^1=\PP_1 \vv_{\pi^0}$, $\pi^1=(\sigma^1,\tau_2(\sigma^1))$. Then $\vv_{\pi^0} \ge \vv_{\pi^1}$.
\end{lemma}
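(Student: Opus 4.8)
The plan is to work entirely with the modified cost vector $\cc^{\pi^0}$ and to reduce the desired inequality $\vv_{\pi^0}\ge\vv_{\pi^1}$ to a sign condition on the entries of $\cc^{\pi^0}$ that are selected by $\pi^1$. The engine is Lemma~\ref{L-modified-2}, which gives $\vv_{\pi^1}-\vv_{\pi^0}=\vv^{\pi^0}_{\pi^1}=(I-\gamma P_{\pi^1})^{-1}(\cc^{\pi^0})_{\pi^1}$, together with Lemma~\ref{L-gP-2}, which guarantees that all entries of $(I-\gamma P_{\pi^1})^{-1}$ are non-negative. Consequently it suffices to prove $(\cc^{\pi^0})_{\pi^1}\le\zz$: multiplying a non-negative matrix by a non-positive vector yields $\vv_{\pi^1}-\vv_{\pi^0}\le\zz$, which is exactly the claim (recall player~$1$ minimizes, so an improved strategy lowers values).

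I would then decompose $\pi^1=\sigma^1\cup\tau^1$ and check the sign of $\cc^{\pi^0}$ on each piece separately. For states $i\in S_1$ we have $\sigma^1(i)=(\PP_1\vv_{\pi^0})(i)=\argmin_{A_i}\cc^{\pi^0}$ by Lemma~\ref{L-pp}; since $(\cc^{\pi^0})_{\sigma^0(i)}=0$ by Lemma~\ref{L-cpi} and $\sigma^1(i)$ attains the minimum of $\cc^{\pi^0}$ over $A_i$, we get $(\cc^{\pi^0})_{\sigma^1(i)}\le 0$, hence $(\cc^{\pi^0})_{\sigma^1}\le\zz$. For states $i\in S_2$ the actions $\tau^1(i)$ all lie in $A^2$, so it is enough to establish $(\cc^{\pi^0})_{A^2}\le\zz$.

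This last point, where the optimality of the counter-strategy $\tau^0=\tau_2(\sigma^0)$ finally enters, is the step I expect to require the most care. I would invoke Lemma~\ref{L-opt-2} in the degenerate sub-game $G'$ obtained from $G$ by deleting, for each $i\in S_1$, every action of player~$1$ except $\sigma^0(i)$. In $G'$ player~$1$ has no real choice, so $\sigma^0$ is trivially an optimal counter-strategy against $\tau^0$, while $\tau^0$ is an optimal counter-strategy against $\sigma^0$ by hypothesis; thus $(\sigma^0,\tau^0)$ is an optimal profile in $G'$. Since neither $\sigma^0$ nor $\tau^0$ uses a deleted action, the value vector of this profile, and hence the modified cost vector $\cc^{\pi^0}$, is unchanged from $G$. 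The optimality condition of Lemma~\ref{L-opt-2} applied in $G'$ then yields $(\cc^{\pi^0})_{A^2}\le\zz$, the inequality for $A^2$ being identical in $G$ and $G'$ because player~$2$'s action set is untouched. Combining the two sign computations gives $(\cc^{\pi^0})_{\pi^1}\le\zz$, which completes the argument.

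The only delicate point is correctly transferring the ``optimal counter-strategy'' hypothesis into the sign condition $(\cc^{\pi^0})_{A^2}\le\zz$: one must confirm that passing to $G'$ leaves $\vv_{\pi^0}$ intact and that Lemma~\ref{L-opt-2} is legitimately applicable there. Alternatively, one can argue directly that a best response makes $\vv_{\pi^0}$ a fixed point of player~$2$'s value-iteration step, so that $(\vv_{\pi^0})_i=\max_{a\in A_i}(\cc+\gamma P\vv_{\pi^0})_a$ for $i\in S_2$, and then read off the sign from the identity $(\cc^{\pi^0})_a=(\cc+\gamma P\vv_{\pi^0})_a-(\vv_{\pi^0})_i$ already used in the proof of Lemma~\ref{L-pp}. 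Everything else is a direct, sign-tracking application of the established lemmas.
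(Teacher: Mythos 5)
Your proposal is correct and follows essentially the same route as the paper's own proof: reduce via Lemma~\ref{L-modified-2} (with the non-negativity of $(I-\gamma P_{\pi^1})^{-1}$ from Lemma~\ref{L-gP-2}) to showing $(\cc^{\pi^0})_{\pi^1}\le\zz$, establish $(\cc^{\pi^0})_{\sigma^1}\le\zz$ from $\sigma^1(i)=\argmin_{A_i}\cc^{\pi^0}$ together with $(\cc^{\pi^0})_{\sigma^0(i)}=0$, and establish $(\cc^{\pi^0})_{A^2}\le\zz$ from the optimality of $\tau^0$ against $\sigma^0$. Your sub-game construction (and the alternative fixed-point identity) simply fills in the last step, which the paper asserts tersely and which amounts to the contrapositive of the second half of the proof of Lemma~\ref{L-opt-2}, so there is no substantive difference in approach.
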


\vspace*{-10pt}
\begin{proof}
We show that $\vv^{\pi^0}_{\pi^0}=\zz\ge \vv^{\pi^0}_{\pi^1}$, which by Lemma~\ref{L-modified-2} implies that $\vv_{\pi^0}\ge \vv_{\pi^1}$. To show that $\vv^{\pi^0}_{\pi^1}\le \zz$, we show that $(\cc^{\pi^0})_{\pi^1}\le \zz$. The fact that $(\cc^{\pi^0})_{\sigma^1}\le \zz$ follows from the fact that for every $i\in S_1$ we have $\sigma^1(i)=\argmin_{A_i} \cc^{\pi^0}$ and hence
$(\cc^{\pi^0})_{\sigma^1(i)}\le (\cc^{\pi^0})_{\sigma^0(i)}=0$.
The fact that $(\cc^{\pi^0})_{\tau^1}\le \zz$ follows from fact that $\tau^0$ is an optimal counter strategy against $\sigma^0$, so in fact $(\cc^{\pi^0})_{A^2}\le \zz$.
\end{proof}

% It is important to note that $\tau^k=\tau_2(\sigma^k)$ while $\sigma^{k+1}=\PP_1 \vv_{\sigma^k,\tau^k}$.

% It is well know, and it will also follow from the discussion below, that $\StrategyIteration(\sigma^0)$ terminates after a finite number of iterations, for any initial policy $\sigma^0$. As our main result, we obtain the first strongly polynomial bound on the maximal number of iterations of the \StrategyIteration\ algorithm for solving 2TBSGs.

% We note that the variant of the \StrategyIteration\ algorithm of Figure~\ref{F-algorithms-2} in which the termination condition $\vv^{k-1}=\vv^k$ is replaced by $\pi^{k-1}=\pi^k$ may fail to terminate if the optimal policy is not unique. This can be fixed by stipulating that ties that arise in the computation of $\PP \vv^k$ are resolved, if possible, in favor of actions contained in $\pi^{k}$.

\begin{lemma} For every initial strategy $\sigma^0$, $\StrategyIteration(\sigma^0)$ terminates after a finite number of iterations. If $(\vv^k)_{k=0}^N$ is the sequence of value vectors generated by the call, then, $\vv^{k-1}> \vv^k\ge \vv^{*}$, for every $1\le k<N$. Furthermore, $\vv^{N-1}=\vv^N=\vv^*$ and $\pi^{N-1}=\pi^N$ is an optimal strategy profile.
\end{lemma}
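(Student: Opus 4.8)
The plan is to prove the four assertions in the order: (i) the lower bound $\vv^k\ge\vv^*$, (ii) the strict decrease $\vv^{k-1}>\vv^k$ before termination, (iii) finiteness of the run, and (iv) optimality of the final profile. The lower bound is the quickest: since the algorithm always lets player~$2$ respond optimally, $\vv^k=\vv_{\sigma^k,\tau_2(\sigma^k)}$, so I would invoke the characterization stated just after Theorem~\ref{T-opt}, namely that an optimal $\sigma^*$ for player~$1$ satisfies $\vv_{\sigma^*,\tau_2(\sigma^*)}\le\vv_{\pi_1',\tau_2(\pi_1')}$ for all $\pi_1'\in\Pol_1$. Taking $\pi_1'=\sigma^k$ and using $\vv^*=\vv_{\sigma^*,\tau_2(\sigma^*)}$ (Theorem~\ref{T-opt}) gives $\vv^k\ge\vv^*$ for every~$k$.

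For the monotone decrease I would strengthen Lemma~\ref{L-SI-1}, applied to the consecutive profiles $\pi^{k-1},\pi^k$. That lemma already yields $(\cc^{\pi^{k-1}})_{\pi^k}\le\zz$ and hence $\vv^{k-1}\ge\vv^k$. To get strictness when $\sigma^k\ne\sigma^{k-1}$ I would use the tie-breaking rule: at any $i\in S_1$ with $\sigma^k(i)\ne\sigma^{k-1}(i)$, the choice $\sigma^k(i)=\argmin_{A_i}\cc^{\pi^{k-1}}$ (Lemma~\ref{L-pp}) must be a \emph{strict} improvement, so $(\cc^{\pi^{k-1}})_{\sigma^k(i)}<(\cc^{\pi^{k-1}})_{\sigma^{k-1}(i)}=0$ by Lemma~\ref{L-cpi}. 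Thus $(\cc^{\pi^{k-1}})_{\pi^k}\le\zz$ has a strictly negative coordinate, and since $(I-\gamma P_{\pi^k})^{-1}$ has strictly positive diagonal and non-negative entries (Lemma~\ref{L-gP-2}), the product $\vv^{\pi^{k-1}}_{\pi^k}=(I-\gamma P_{\pi^k})^{-1}(\cc^{\pi^{k-1}})_{\pi^k}$ is strictly negative in that coordinate. Lemma~\ref{L-modified-2} then gives $\vv^k-\vv^{k-1}=\vv^{\pi^{k-1}}_{\pi^k}<\zz$, i.e. $\vv^{k-1}>\vv^k$.

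Finiteness is then immediate: the strict decrease makes the vectors $\vv^k$ pairwise distinct while the algorithm runs, and each $\vv^k$ is determined by $\sigma^k$ alone (player~$2$'s optimal response fixes the value once $\sigma^k$ is fixed); since $\Pol_1$ is finite, the run halts after finitely many iterations. At termination the loop condition gives $\sigma^{N-1}=\sigma^N$, hence $\pi^{N-1}=\pi^N$ and $\vv^{N-1}=\vv^N$. For optimality I would check the condition of Lemma~\ref{L-opt-2}: the equality $\sigma^{N-1}=\PP_1\vv^{N-1}$ with Lemma~\ref{L-pp} forces $(\cc^{\pi^{N-1}})_{A^1}\ge\zz$, while the fact that $\tau^{N-1}=\tau_2(\sigma^{N-1})$ is an optimal counter-strategy for player~$2$ forces $(\cc^{\pi^{N-1}})_{A^2}\le\zz$ (this is exactly the player-$2$ direction of the argument inside the proof of Lemma~\ref{L-opt-2}, specialized to the one-player game where $\sigma^{N-1}$ is fixed). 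Hence $\pi^{N-1}$ is optimal, and Theorem~\ref{T-opt} gives $\vv^{N-1}=\vv^N=\vv^*$.

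The step I expect to be the main obstacle is the strictness in part~(ii): one must argue that a genuine change of $\sigma$, guaranteed to be strict by the tie-breaking rule, actually propagates through $(I-\gamma P_{\pi^k})^{-1}$ into a strict drop of the value at the changed coordinate, which is where the strict positivity of the diagonal in Lemma~\ref{L-gP-2} is essential. The other delicate point is deriving $(\cc^{\pi^{N-1}})_{A^2}\le\zz$ from optimality of the counter-strategy rather than from a full equilibrium hypothesis; this is handled by reading Lemma~\ref{L-opt-2} in the degenerate one-player setting.
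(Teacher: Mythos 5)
Your proof is correct and follows essentially the same route as the paper's: monotonicity from Lemma~\ref{L-SI-1}, strictness of the decrease from the tie-breaking rule combined with the non-negativity and strictly positive diagonal of $(I-\gamma P_{\pi^k})^{-1}$ (Lemma~\ref{L-gP-2}), finiteness from non-repetition of strategies, and optimality at termination via the condition of Lemma~\ref{L-opt-2}. The only organizational difference is that you argue the contrapositive (if $\sigma^k\ne\sigma^{k-1}$ then $\vv^{k-1}>\vv^k$) where the paper argues that $\vv^{k-1}=\vv^k$ forces $(\cc^{\pi^{k-1}})_{A^1}\ge\zz$, $(\cc^{\pi^{k-1}})_{A^2}\le\zz$ and hence optimality and termination, and that you spell out two steps the paper leaves implicit, namely $\vv^k\ge\vv^*$ via the characterization following Theorem~\ref{T-opt} and the one-player reading of Lemma~\ref{L-opt-2} giving $(\cc^{\pi^{N-1}})_{A^2}\le\zz$.
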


\vspace*{-10pt}
\begin{proof} The claim $\vv^{k-1}\ge \vv^k$, for every $1\le k\le N$ follows easily from Lemma~\ref{L-SI-1} by induction.
Next, we note that if $\vv^{k-1}=\vv^k$, for some $k$, then by the reasoning used in the proof of Lemma~\ref{L-SI-1}, we must have $(\cc^{\pi^{k-1}})_{A^1}\ge \zz$ and $(\cc^{\pi^{k-1}})_{A^2}\le \zz$. By the optimality condition, we get that $\pi^{k-1}$ is an optimal strategy profile. By the tie breaking mechanism used, we also get that $\pi^k=\pi^{k-1}$.
Finally, the fact that $\vv^{k-1}>\vv^k$, for every $1\le k<N$, implies that strategy profiles encountered cannot repeat themselves. As there is only a finite number of such profiles, the sequence of strategy profiles generated must be finite.
\end{proof}

We next relate the sequences of value vectors obtained by running
$\StrategyIteration(\sigma^0)$ and
$\ValueIteration(\vv_{\pi^0},\epsilon)$, where
$\pi^0=(\sigma^0,\tau_2(\sigma^0))$. The following lemmas
for the case of MDPs are well-known and appear, e.g., in
Meister and Holzbaur \cite{Meister86}. The proofs for the 2-player case
are essentially identical. (They may be folklore.)
% may be folklore (we do not know a reference--but is essentially identical.

\begin{lemma}\label{L-SI-2} Let $\sigma^0\in \Pol_1$, $\pi^0=(\sigma^0,\tau_2(\sigma^0))$, and $\sigma^1=\PP_1 \vv_{\pi^0}$, $\pi^1=(\sigma^1,\tau_2(\sigma^1))$. Then  $\TT\vv_{\pi^0} \ge \vv_{\pi^1}$.
\end{lemma}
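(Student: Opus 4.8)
The plan is to relate the value iteration operator $\TT$ applied to $\vv_{\pi^0}$ with the value vector $\vv_{\pi^1}$ of the improved profile. First I would recall what $\sigma^1 = \PP_1\vv_{\pi^0}$ and $\tau^1 = \tau_2(\sigma^1)$ achieve in terms of the modified cost vector $\cc^{\pi^0}$, exactly as in the proof of Lemma~\ref{L-SI-1}: for every $i\in S_1$ we have $\sigma^1(i) = \argmin_{A_i}\cc^{\pi^0}$, while $\tau^0$ being an optimal counter strategy against $\sigma^0$ gives $(\cc^{\pi^0})_{A^2}\le\zz$. The key is to rewrite $\TT\vv_{\pi^0}$ in terms of $\cc^{\pi^0}$ using Lemma~\ref{L-pp}, which says that the extraction $\PP\vv_{\pi^0}$ picks out exactly the $\argmin$ over $A_i$ for $i\in S_1$ and $\argmax$ for $i\in S_2$ of the modified cost.

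Concretely, I would start from Lemma~\ref{L-VP}, which gives $\TT\vv_{\pi^0} = (\cc + \gamma P\vv_{\pi^0})_{\hat\pi}$ where $\hat\pi = \PP\vv_{\pi^0}$. Using the identity $(\cc^{\pi^0})_a = (\cc + \gamma P\vv_{\pi^0})_a - (\vv_{\pi^0})_i$ from the proof of Lemma~\ref{L-pp}, one sees that $\TT\vv_{\pi^0} = \vv_{\pi^0} + (\cc^{\pi^0})_{\hat\pi}$. Now $\hat\pi = (\PP_1\vv_{\pi^0}, \PP_2\vv_{\pi^0}) = (\sigma^1, \PP_2\vv_{\pi^0})$, so the player-1 part of $(\cc^{\pi^0})_{\hat\pi}$ agrees with $(\cc^{\pi^0})_{\sigma^1}$. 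The goal is to compare this with $\vv_{\pi^1}$, which by Lemma~\ref{L-modified-2} equals $\vv_{\pi^0} + \vv^{\pi^0}_{\pi^1}$, so it suffices to show $(\cc^{\pi^0})_{\hat\pi} \ge \vv^{\pi^0}_{\pi^1}$.

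For this last comparison I would use $\vv^{\pi^0}_{\pi^1} = (I-\gamma P_{\pi^1})^{-1}(\cc^{\pi^0})_{\pi^1}$ from Lemma~\ref{L-value-vector-2}. Since $(I-\gamma P_{\pi^1})^{-1} = \sum_{k\ge 0}(\gamma P_{\pi^1})^k$ by Lemma~\ref{L-gP-2}, we have $(I-\gamma P_{\pi^1})^{-1}(\cc^{\pi^0})_{\pi^1} = (\cc^{\pi^0})_{\pi^1} + \gamma P_{\pi^1}(I-\gamma P_{\pi^1})^{-1}(\cc^{\pi^0})_{\pi^1} = (\cc^{\pi^0})_{\pi^1} + \gamma P_{\pi^1}\vv^{\pi^0}_{\pi^1}$. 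Now $(\cc^{\pi^0})_{\pi^1}\le\zz$ was already established in Lemma~\ref{L-SI-1}, and by that same lemma $\vv^{\pi^0}_{\pi^1}\le\zz$, so since $P_{\pi^1}$ has nonnegative entries the whole expression is $\le (\cc^{\pi^0})_{\pi^1}$. On the other hand $(\cc^{\pi^0})_{\hat\pi}$ has player-1 entries equal to $(\cc^{\pi^0})_{\sigma^1} = (\cc^{\pi^0})_{\pi^1 \text{ on } S_1}$ and player-2 entries $(\cc^{\pi^0})_{\PP_2\vv_{\pi^0}} \ge (\cc^{\pi^0})_{\tau^1}$ by the $\argmax$ property, so $(\cc^{\pi^0})_{\hat\pi}\ge (\cc^{\pi^0})_{\pi^1}\ge \vv^{\pi^0}_{\pi^1}$, which gives $\TT\vv_{\pi^0}\ge\vv_{\pi^1}$.

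The main obstacle is bookkeeping the player-2 coordinates correctly: the extracted profile $\hat\pi = \PP\vv_{\pi^0}$ uses $\PP_2\vv_{\pi^0}$ on $S_2$, which need not coincide with $\tau^1$, so I must invoke the $\argmax$ property (via Lemma~\ref{L-pp}) to control the sign, rather than a direct equality. I would be careful that the inequality directions for the two players go opposite ways but both push $(\cc^{\pi^0})_{\hat\pi}$ above $(\cc^{\pi^0})_{\pi^1}$, and that nonnegativity of $P_{\pi^1}$ is what lets me drop the $\gamma P_{\pi^1}\vv^{\pi^0}_{\pi^1}$ term since $\vv^{\pi^0}_{\pi^1}\le\zz$.
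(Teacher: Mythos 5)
Your proof is correct, but it takes a different route from the paper's. The paper argues pointwise in the original costs: for $i\in S_1$ it writes $(\TT\vv_{\pi^0})_i=(\cc+\gamma P\vv_{\pi^0})_{\sigma^1(i)}$, uses $\vv_{\pi^0}\ge\vv_{\pi^1}$ (Lemma~\ref{L-SI-1}) and nonnegativity of $P$ to replace $\vv_{\pi^0}$ by $\vv_{\pi^1}$, and then invokes the fixed-point identity $\cc_{\pi^1}+\gamma P_{\pi^1}\vv_{\pi^1}=\vv_{\pi^1}$; for $i\in S_2$ it bounds the $\max$ from below by the entry at $\tau^1(i)$ and repeats the same two steps. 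You instead conjugate the whole claim by the potential transformation: subtracting $\vv_{\pi^0}$ turns $\TT\vv_{\pi^0}$ into $(\cc^{\pi^0})_{\hat\pi}$ (via Lemmas~\ref{L-VP} and~\ref{L-pp}) and $\vv_{\pi^1}$ into $\vv^{\pi^0}_{\pi^1}$ (via Lemma~\ref{L-modified-2}), reducing everything to the sign comparison $(\cc^{\pi^0})_{\hat\pi}\ge(\cc^{\pi^0})_{\pi^1}\ge\vv^{\pi^0}_{\pi^1}$. Your two inequalities are exact transforms of the paper's two steps: the argmin/argmax bookkeeping on $\hat\pi$ versus $\pi^1$ corresponds to the paper's $\min$/$\max$ step, and dropping $\gamma P_{\pi^1}\vv^{\pi^0}_{\pi^1}$ (legitimate since $\vv^{\pi^0}_{\pi^1}\le\zz$) corresponds to the paper's monotonicity step, with $\vv^{\pi^0}_{\pi^1}=(I-\gamma P_{\pi^1})^{-1}(\cc^{\pi^0})_{\pi^1}$ playing the role of the fixed-point identity. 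What your route buys is uniformity with the modified-cost framework of Section~3 and of Lemma~\ref{L-SI-1}, making the ``improvement is measured in modified costs'' intuition explicit; what it costs is heavier machinery and a dependence on facts established inside the \emph{proof} of Lemma~\ref{L-SI-1} (namely $(\cc^{\pi^0})_{\pi^1}\le\zz$, not just its stated conclusion), which you do correctly re-derive in your first paragraph, whereas the paper's argument is shorter and needs only the lemma's statement. One point you handled well and should keep: the player-2 component of $\hat\pi=\PP\vv_{\pi^0}$ need not equal $\tau^1$, so the inequality there genuinely requires the argmax property from Lemma~\ref{L-pp} rather than an equality, and your inequality directions all check out.
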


\begin{proof}
\abovedisplayskip = 7pt
\belowdisplayskip = 7pt
Let $i\in S_1$. As $\sigma^1(i)=\argmin_{A_i}\cc+\gamma P \vv_{\pi^0}$, $\vv_{\pi^0} \ge \vv_{\pi^1}$, and $\cc_{\pi^1}+\gamma P_{\pi^1} \vv_{\pi^1}= \vv_{\pi^1}$, we have
$$\textstyle (\TT\vv_{\pi^0})_i \;=\; \min_{A_i} \cc+\gamma P \vv_{\pi^0} \;=\;
(\cc+\gamma P \vv_{\pi^0})_{\sigma^1(i)} \;\ge\; (\cc+\gamma P \vv_{\pi^1})_{\sigma^1(i)} \;=\; (\vv_{\pi^1})_i.$$
Similarly, if $i\in S_2$, then
\[\textstyle (\TT\vv_{\pi^0})_i \;=\; \max_{A_i} \cc+\gamma P \vv_{\pi^0} \;\ge\;
(\cc+\gamma P \vv_{\pi^0})_{\tau^1(i)} \;\ge\; (\cc+\gamma P \vv_{\pi^1})_{\tau^1(i)} \;=\; (\vv_{\pi^1})_i.\qedhere\]
\end{proof}

Using Lemma~\ref{L-SI-2}, we immediately get:

\begin{lemma}\label{L-PI-seq}
Let $(\vv^k)_{k=0}^N$ be the value vectors generated by %the call
$\StrategyIteration(\sigma^0)$, and let $(\uu^k)_{k=0}^\infty$ be the value vectors generated by % the call
$\ValueIteration(\vv_{\pi^0},0)$, where $\pi^0=(\sigma^0,\tau_2(\sigma^0))$. Then, $\vv^k\le \uu^k$, for every $0\le k\le N$.
\end{lemma}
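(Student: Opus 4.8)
The plan is to prove the inequality $\vv^k\le\uu^k$ by induction on $k$, using Lemma~\ref{L-SI-2} together with the \emph{monotonicity} of the value iteration operator $\TT$. The base case is immediate: by construction the value iteration is started at $\uu^0=\vv_{\pi^0}=\vv^0$, so $\vv^0\le\uu^0$ holds with equality.

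For the inductive step, suppose $\vv^k\le\uu^k$ for some $0\le k<N$. I would first invoke Lemma~\ref{L-SI-2}, applied with $\sigma^0$ replaced by $\sigma^k$, which gives $\vv^{k+1}=\vv_{\pi^{k+1}}\le\TT\vv_{\pi^k}=\TT\vv^k$. It then remains to compare $\TT\vv^k$ with $\uu^{k+1}=\TT\uu^k$. This is where monotonicity enters: the operator $\TT$ is order-preserving, so the induction hypothesis $\vv^k\le\uu^k$ yields $\TT\vv^k\le\TT\uu^k=\uu^{k+1}$. Chaining the two inequalities gives $\vv^{k+1}\le\TT\vv^k\le\uu^{k+1}$, completing the induction.

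The only step that is not a direct appeal to a previously stated result is the monotonicity of $\TT$, so this is the part I would state explicitly (it is the mild ``obstacle,'' though it is routine). If $\uu\le\vv$ componentwise, then since every entry of $P$ is non-negative we have $P\uu\le P\vv$, hence $\cc+\gamma P\uu\le\cc+\gamma P\vv$ entrywise; taking $\min_{A_i}$ or $\max_{A_i}$ of corresponding components preserves the inequality, so $\TT\uu\le\TT\vv$. This is the same non-negativity of $P$ already exploited in the proof of Lemma~\ref{L-contract-2}, so no new machinery is needed. Once monotonicity is in hand, the induction above closes immediately and establishes $\vv^k\le\uu^k$ for every $0\le k\le N$.

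Intuitively, the statement says that one step of strategy iteration decreases the value vector at least as much as one step of value iteration from the same starting point, and this advantage compounds: strategy iteration always stays below the value iteration trajectory. This monotone domination is precisely what will later let the rapid geometric convergence of value iteration (Lemma~\ref{L-VI-2}) be transferred into a bound on strategy iteration, so I would present the lemma with that downstream use in mind.
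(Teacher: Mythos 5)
Your proof is correct and follows essentially the same route as the paper: induction on $k$, with the base case $\vv^0=\uu^0$, the chain $\vv^{k+1}\le\TT\vv^k\le\TT\uu^k=\uu^{k+1}$ via Lemma~\ref{L-SI-2}, and the monotonicity of $\TT$. The only difference is that you spell out the (routine) proof of monotonicity, which the paper invokes without proof; that is a harmless and indeed welcome addition.
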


\begin{proof} We prove the lemma by induction. We have $\vv^0=\uu^0$. Suppose now that $\vv^k\le \uu^k$. Then, by Lemma~\ref{L-SI-2} and the monotonicity of the value iteration operator, we have:
\[ \vv^{k+1} \;\le\; \TT\vv^k \;\le\; \TT\uu^k \;=\; \uu^{k+1}.\qedhere\]
\end{proof}
% \begin{lemma}\label{L-PIn} Let $(\vv^k)_{k=0}^N$ be the sequence of value vectors generated by a % call $\PolicyIteration(\pi^0)$, for some initial policy $\pi^0\in \Pi$. Let $\vv^*$ be the
% optimal value vector.
%  Then, for every $0\le k\le N$ we have
% $$\|\vv^k-\vv^*\|_\infty \;\le\; \gamma^k\, \| \vv^0-\vv^*\|_\infty.$$
% \end{lemma}

Combining Lemmas~\ref{L-VI-2} and~\ref{L-PI-seq}, we get

\begin{lemma}\label{L-PI-2} Let $(\vv^k)_{k=0}^N$ be the sequence of value vectors generated by $\StrategyIteration(\sigma^0)$, for some $\sigma^0\in \Pol_1$. Let $\vv^*$ be the optimal value vector. Then, for every $0\le k\le N$ we have
% $$\ee^T(\vv^k-\vv^*) \;\le\; n\gamma^k\, \ee^T (\vv^0-\vv^*).$$
$$\|\vv^k-\vv^*\|_\infty \;\le\; \gamma^k\, \|\vv^0-\vv^*\|_\infty.$$
\end{lemma}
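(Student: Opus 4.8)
The plan is to sandwich the strategy-iteration value vectors between the optimal value vector $\vv^*$ and the value-iteration value vectors started from the same initial point, and then transfer the geometric decay from value iteration to strategy iteration. Essentially all the work has been done in Lemmas~\ref{L-VI-2} and~\ref{L-PI-seq}; what remains is to combine them correctly.

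First I would observe that the two algorithms are initialized at the same vector. The value iteration call $\ValueIteration(\vv_{\pi^0},0)$ appearing in Lemma~\ref{L-PI-seq} has $\uu^0 = \vv_{\pi^0} = \vv^0$, so in particular $\|\uu^0 - \vv^*\|_\infty = \|\vv^0 - \vv^*\|_\infty$. Next I would invoke two one-sided comparisons already established. From Lemma~\ref{L-PI-seq} we have $\vv^k \le \uu^k$ for every $0 \le k \le N$, and from the termination lemma for $\StrategyIteration$ the sequence $(\vv^k)$ decreases to $\vv^*$, so in particular $\vv^k \ge \vv^*$ for every $k$. Together these give the componentwise chain $\vv^* \le \vv^k \le \uu^k$.

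The key step is then to turn this chain into a norm inequality. Subtracting $\vv^*$ throughout yields $\zz \le \vv^k - \vv^* \le \uu^k - \vv^*$, where all vectors involved are componentwise nonnegative. For a nonnegative vector the $\infty$-norm is simply its largest coordinate, and this is monotone under the componentwise order, so $\|\vv^k - \vv^*\|_\infty \le \|\uu^k - \vv^*\|_\infty$. Applying Lemma~\ref{L-VI-2} to the value-iteration sequence and using $\uu^0 = \vv^0$, I would conclude $\|\vv^k - \vv^*\|_\infty \le \|\uu^k - \vv^*\|_\infty \le \gamma^k \|\uu^0 - \vv^*\|_\infty = \gamma^k \|\vv^0 - \vv^*\|_\infty$, which is exactly the claim.

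I do not expect a genuine obstacle, but the one point that requires care is the monotonicity of the $\infty$-norm. It is valid precisely because the sandwiching forces $\vv^k - \vv^*$ to be nonnegative: without the lower bound $\vv^k \ge \vv^*$ supplied by the termination lemma, the coordinatewise domination $\vv^k - \vv^* \le \uu^k - \vv^*$ alone would not control $\|\vv^k - \vv^*\|_\infty$. Thus the two-sided comparison $\vv^* \le \vv^k \le \uu^k$ is doing the essential work, and it is exactly what the two cited lemmas provide.
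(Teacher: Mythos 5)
Your proof is correct and follows exactly the route the paper takes: the paper derives Lemma~\ref{L-PI-2} simply by ``combining Lemmas~\ref{L-VI-2} and~\ref{L-PI-seq}'', which is precisely your sandwich $\vv^* \le \vv^k \le \uu^k$ together with the contraction bound for value iteration. Your explicit justification of the $\infty$-norm comparison via the lower bound $\vv^k \ge \vv^*$ (from the termination lemma) is a detail the paper leaves implicit, and you have identified correctly that it is needed.
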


\section{Strongly polynomial bound}\label{S-strong}

In this section, the main section of the paper, we present our strongly polynomial bound on the number of iterations performed by the strategy iteration algorithm. We begin with some technical lemmas.

\begin{lemma}\label{L-G1} Let $\pi',\pi$ be two strategy profiles such that $\vv_{\pi'}\ge \vv_\pi$ and let $a=\pi'(i)$ where $i\in S$.
Then, $$(\vv_{\pi'}-\vv_\pi)_i \;\ge\; (\cc^\pi)_a.$$
\end{lemma}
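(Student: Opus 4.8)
The plan is to translate the quantity $(\cc^\pi)_a$ into a statement about value vectors and then read off a single coordinate of a linear system. The natural object to use is the modified value vector $\vv^\pi_{\pi'}$, since by Lemma~\ref{L-modified-2} we already know $\vv^\pi_{\pi'} = \vv_{\pi'}-\vv_\pi$, so the conclusion we want is exactly $(\vv^\pi_{\pi'})_i \ge (\cc^\pi)_a$. The hypothesis $\vv_{\pi'}\ge\vv_\pi$ becomes $\vv^\pi_{\pi'}\ge\zz$, which is the nonnegativity I intend to exploit at the end.

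First I would recall from the proof of Lemma~\ref{L-modified-2} that $\vv^\pi_{\pi'}=(I-\gamma P_{\pi'})^{-1}(\cc^\pi)_{\pi'}$, equivalently $(I-\gamma P_{\pi'})\,\vv^\pi_{\pi'}=(\cc^\pi)_{\pi'}$. The key move is to extract the $i$-th coordinate of this vector identity. Since $a=\pi'(i)$, the $i$-th row of $P_{\pi'}$ is precisely the row $P_a$ of $P$ associated with the action $a$, and the $i$-th entry of $(\cc^\pi)_{\pi'}$ is $(\cc^\pi)_{\pi'(i)}=(\cc^\pi)_a$. Reading off row $i$ therefore gives
$$(\vv^\pi_{\pi'})_i - \gamma\,P_a\,\vv^\pi_{\pi'} \;=\; (\cc^\pi)_a.$$

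Rearranging yields $(\vv^\pi_{\pi'})_i - (\cc^\pi)_a = \gamma\,P_a\,\vv^\pi_{\pi'}$. Now I would invoke the hypothesis: $\vv^\pi_{\pi'}=\vv_{\pi'}-\vv_\pi\ge\zz$, while $P_a$ has nonnegative entries (it is a row of the stochastic matrix $P$) and $\gamma>0$, so the right-hand side is nonnegative. Hence $(\vv^\pi_{\pi'})_i\ge(\cc^\pi)_a$, and substituting $\vv^\pi_{\pi'}=\vv_{\pi'}-\vv_\pi$ gives exactly $(\vv_{\pi'}-\vv_\pi)_i\ge(\cc^\pi)_a$.

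This argument is essentially pure bookkeeping, so there is no real obstacle; the only point requiring care is the identification of the $i$-th row of $P_{\pi'}$ with $P_a$, which is where the hypothesis $a=\pi'(i)$ is used. (One can also bypass $\vv^\pi_{\pi'}$ entirely by expanding $(\cc^\pi)_a=(\cc+\gamma P\vv_\pi)_a-(\vv_\pi)_i$ as in Lemma~\ref{L-pp} and using the fixed-point relation $(\vv_{\pi'})_i=\cc_a+\gamma P_a\vv_{\pi'}$; subtracting gives the same identity $(\vv_{\pi'}-\vv_\pi)_i-(\cc^\pi)_a=\gamma P_a(\vv_{\pi'}-\vv_\pi)$, and nonnegativity finishes it. I would present whichever route is shorter, likely the modified-value-vector version.)
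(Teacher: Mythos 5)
Your proof is correct and takes essentially the same approach as the paper: reading off row $i$ of $(I-\gamma P_{\pi'})\vv^\pi_{\pi'}=(\cc^\pi)_{\pi'}$ is, after substituting $\vv^\pi_{\pi'}=\vv_{\pi'}-\vv_\pi$, exactly the paper's one-line chain $(\vv_{\pi'})_i=(\cc+\gamma P\vv_{\pi'})_a\ge(\cc+\gamma P\vv_\pi)_a=(\cc^\pi)_a+(\vv_\pi)_i$, with monotonicity of the nonnegative stochastic row $P_a$ doing the work in both versions. Indeed, your parenthetical ``direct'' variant is verbatim the paper's proof, so the modified-value-vector packaging is only a cosmetic difference.
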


\begin{proof}
% $$\begin{array}{lll}
% (\vv_{\pi'})_i - (\vv_\pi)_i & = & (\cc_a+\gamma P_a \vv_{\pi'}) - (\vv_\pi)_i \\
% & \ge & (\cc_a+\gamma P_a \vv_{\pi}) - (\vv_\pi)_i \\
% & = & (\cc^\pi)_a.
% \end{array}$$
$\quad\quad\quad (\vv_{\pi'})_i - (\vv_\pi)_i \; = \; (\cc+\gamma P \vv_{\pi'})_a - (\vv_\pi)_i
\;\ge\; (\cc+\gamma P \vv_{\pi})_a - (\vv_\pi)_i \;=\; (\cc^\pi)_a.$
\end{proof}

\begin{lemma}\label{L-G2} Let $\pi'',\pi$ be two strategy profiles such that $\vv_{\pi''}\ge \vv_{\pi}$ and let $a=\argmax_{\pi''} \cc^\pi$.
Then,%  $\ee^T(\vv_{\pi''}-\vv_\pi) \le  \frac{n}{1-\gamma} (\cc^\pi)_a $.
$$\|\vv_{\pi''}-\vv_\pi\|_1 \;\le\;  \frac{n}{1-\gamma} (\cc^\pi)_a .$$
\end{lemma}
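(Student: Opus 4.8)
The plan is to reduce everything to quantities already computed in Section~\ref{S-basic}, so that no new linear algebra is needed. The first move is to exploit the hypothesis $\vv_{\pi''}\ge \vv_\pi$: this makes the vector $\vv_{\pi''}-\vv_\pi$ entrywise non-negative, so its $\ell_1$ norm is simply the sum of its coordinates, i.e.\ $\|\vv_{\pi''}-\vv_\pi\|_1 = \ee^T(\vv_{\pi''}-\vv_\pi)$. This is the one place where the sign assumption is genuinely used, and getting it in at the start is what lets the rest go through cleanly.

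Next I would invoke Lemma~\ref{L-xs-2}, which states exactly $\ee^T(\vv_{\pi''}-\vv_\pi)=\xx_{\pi''}(\cc^\pi)_{\pi''}$. This rewrites the target quantity as a single inner product between the flux vector $\xx_{\pi''}$ and the restriction of the modified cost vector to the actions of $\pi''$. The whole lemma now becomes an estimate on this inner product.

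To finish, I would use two facts about the two factors. First, $\xx_{\pi''}$ is entrywise non-negative: by Lemma~\ref{L-flux-2} it equals $\ee^T(I-\gamma P_{\pi''})^{-1}$, and by Lemma~\ref{L-gP-2} that inverse has only non-negative entries. Second, by the definition $a=\argmax_{\pi''}\cc^\pi$, the $n$-vector $(\cc^\pi)_{\pi''}$ satisfies the entrywise bound $(\cc^\pi)_{\pi''}\le (\cc^\pi)_a\,\ee$. Combining these — using the elementary fact that a non-negative row vector preserves an entrywise inequality under dot product — gives $\xx_{\pi''}(\cc^\pi)_{\pi''}\le (\cc^\pi)_a\,\xx_{\pi''}\ee$, and then Lemma~\ref{L-x-2}, namely $\xx_{\pi''}\ee=\frac{n}{1-\gamma}$, yields the claimed bound. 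Note this last chain holds regardless of the sign of $(\cc^\pi)_a$, so no case analysis is needed.

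I do not expect a serious obstacle here: the argument is essentially a bookkeeping assembly of Lemmas~\ref{L-xs-2}, \ref{L-flux-2}, \ref{L-gP-2} and~\ref{L-x-2}. The only point requiring care is directional correctness — one must pair the non-negativity of $\xx_{\pi''}$ with the \emph{upper} bound coming from the $\argmax$, rather than a lower bound, so that the inequality points the right way when $\ee$ replaces $(\cc^\pi)_{\pi''}$. That matching of ``$\ge$'' hypotheses with ``$\argmax$'' and non-negative flux is the one spot where a sign slip could creep in, and it is where I would double-check the direction of each inequality.
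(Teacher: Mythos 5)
Your proof is correct and is essentially the paper's own argument: the paper's one-line chain $\|\vv_{\pi''}-\vv_\pi\|_1 = \ee^T(\vv_{\pi''}-\vv_\pi) = \xx_{\pi''}(\cc^\pi)_{\pi''} \le \xx_{\pi''}\ee\,(\cc^\pi)_a = \frac{n}{1-\gamma}(\cc^\pi)_a$ uses exactly your steps, via Lemmas~\ref{L-xs-2} and~\ref{L-x-2}. Your only addition is to make explicit the non-negativity of $\xx_{\pi''}$ (via Lemmas~\ref{L-flux-2} and~\ref{L-gP-2}) justifying the middle inequality, which the paper leaves implicit.
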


\begin{proof} As $\vv_{\pi''}\ge \vv_{\pi}$, we get using
Lemma~\ref{L-xs-2} and then Lemma~\ref{L-x-2} that
\[\|\vv_{\pi''}-\vv_\pi\|_1 \;=\; \ee^T(\vv_{\pi''}-\vv_\pi) \;=\; \xx_{\pi''}(\cc^\pi)_{\pi''} \;\le\; \xx_{\pi''} \ee\, (\cc^\pi)_a \;=\; \frac{n}{1-\gamma} (\cc^\pi)_a .\qedhere\]
\end{proof}

\begin{lemma}\label{L-G3} Let $\pi'',\pi',\pi$ be three strategy profiles such that $\vv_{\pi''}\ge \vv_{\pi'}\ge \vv_{\pi}$. Let $a=\argmax_{\pi''} \cc^\pi$ and suppose that % $(\cc^\pi)_a>0$ and
$a\in \pi'$. Then,
% $\ee^T(\vv_{\pi'}-\vv_{\pi}) \ge \frac{1-\gamma}{n}\, \ee^T(\vv_{\pi''}-\vv_{\pi})$.
$$\|\vv_{\pi'}-\vv_{\pi}\|_1 \;\ge\; \frac{1-\gamma}{n}\, \|\vv_{\pi''}-\vv_{\pi}\|_1.$$
\end{lemma}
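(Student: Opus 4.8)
The plan is to use the single action $a=\argmax_{\pi''}\cc^\pi$ as a pivot connecting the two preceding lemmas: it drives the upper bound on $\|\vv_{\pi''}-\vv_\pi\|_1$ in Lemma~\ref{L-G2}, and, since it is assumed to lie in $\pi'$, it also supplies a coordinate lower bound on $\vv_{\pi'}-\vv_\pi$ via Lemma~\ref{L-G1}. Sandwiching the scalar $(\cc^\pi)_a$ between these two quantities yields exactly the desired inequality.

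First I would apply Lemma~\ref{L-G2} to the pair $(\pi'',\pi)$. The hypothesis $\vv_{\pi''}\ge\vv_\pi$ required there follows from the given chain $\vv_{\pi''}\ge\vv_{\pi'}\ge\vv_\pi$, and $a=\argmax_{\pi''}\cc^\pi$ is precisely the maximizing action the lemma refers to. This gives $\|\vv_{\pi''}-\vv_\pi\|_1\le\frac{n}{1-\gamma}(\cc^\pi)_a$, which after dividing by the positive quantity $\frac{n}{1-\gamma}$ rearranges to $(\cc^\pi)_a\ge\frac{1-\gamma}{n}\|\vv_{\pi''}-\vv_\pi\|_1$.

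Next, since $a\in\pi'$, I would write $a=\pi'(i)$ for some $i\in S$ and apply Lemma~\ref{L-G1} to the pair $(\pi',\pi)$, whose hypothesis $\vv_{\pi'}\ge\vv_\pi$ is among the assumptions. This gives $(\vv_{\pi'}-\vv_\pi)_i\ge(\cc^\pi)_a$. Because $\vv_{\pi'}\ge\vv_\pi$, every coordinate of $\vv_{\pi'}-\vv_\pi$ is non-negative, so the $\ell_1$ norm dominates any single coordinate, giving $\|\vv_{\pi'}-\vv_\pi\|_1\ge(\vv_{\pi'}-\vv_\pi)_i\ge(\cc^\pi)_a$. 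Chaining this with the bound from the previous paragraph produces $\|\vv_{\pi'}-\vv_\pi\|_1\ge(\cc^\pi)_a\ge\frac{1-\gamma}{n}\|\vv_{\pi''}-\vv_\pi\|_1$, which is the claim.

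There is essentially no obstacle here beyond bookkeeping: all the analytic content is already packaged into Lemmas~\ref{L-G1} and~\ref{L-G2}, and the only new ingredient is the observation that the maximizing action $a$ can play a double role — as the maximizer feeding Lemma~\ref{L-G2} and as a coordinate witness feeding Lemma~\ref{L-G1}. The single point that genuinely requires the full hypothesis is the step $\|\vv_{\pi'}-\vv_\pi\|_1\ge(\vv_{\pi'}-\vv_\pi)_i$, which relies on the non-negativity of all coordinates of $\vv_{\pi'}-\vv_\pi$; this is exactly why the ordering $\vv_{\pi'}\ge\vv_\pi$, and not merely the membership $a\in\pi'$, is assumed.
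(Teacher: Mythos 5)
Your proof is correct and follows exactly the same route as the paper: apply Lemma~\ref{L-G2} to $(\pi'',\pi)$ to get $(\cc^\pi)_a \ge \frac{1-\gamma}{n}\|\vv_{\pi''}-\vv_\pi\|_1$, then use $a=\pi'(i)$ with Lemma~\ref{L-G1} and the coordinate-wise non-negativity of $\vv_{\pi'}-\vv_\pi$ to chain $\|\vv_{\pi'}-\vv_\pi\|_1 \ge (\vv_{\pi'}-\vv_\pi)_i \ge (\cc^\pi)_a$. Your explicit remark that the ordering $\vv_{\pi'}\ge\vv_\pi$ is what justifies bounding a single coordinate by the $\ell_1$ norm is a point the paper leaves implicit, but the argument is identical.
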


\begin{proof}
\abovedisplayskip = 5pt
Let $i\in S$ be the state for which $\pi''(i)=\pi'(i)=a$.
% (We know that $a\in \pi'$, so there must be such a state.)
By Lemma~\ref{L-G1} and Lemma~\ref{L-G2} we get
\[\|\vv_{\pi'}-\vv_\pi\|_1 \;\ge\; (\vv_{\pi'}-\vv_\pi)_i \;\ge\; (\cc^\pi)_{a} \;\ge\; \frac{1-\gamma}{n}\,  \|\vv_{\pi''}-\vv_\pi\|_1.\qedhere\]
\end{proof}

\begin{lemma}\label{L-main-2} Let $(\sigma^k)_{k=0}^N$ be the sequence of player~$1$ strategies generated by the \StrategyIteration\ algorithm, starting from some initial strategy $\sigma^0$. % Let $L=\bigl(\log\frac{1-\gamma}{n^2}\bigr)/(\log \gamma)$.
% Let $L=\log_{\,\gamma}\frac{1-\gamma}{n^2}$.
Let $L=\log_{\,1/\gamma}\frac{n^2}{1-\gamma}$.
Then, every strategy $\sigma^k$ contains an action that does not appear in any strategy $\sigma^{\ell}$, where $k+L<\ell\le N$.
\end{lemma}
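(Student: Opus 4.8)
The plan is to produce, for each substantive index $k$, a single \emph{witness} action of $\sigma^k$ and to show it is absent from every later $\sigma^\ell$ with $k+L<\ell\le N$. The essential idea is to measure \emph{all} strategies against one fixed reference, the terminal optimal profile $\pi^N$ (recall $\vv^N=\vv^*$ and $\pi^N$ is optimal), so that one witness serves all $\ell$ simultaneously rather than drifting with $\ell$. If $k+L\ge N$ the asserted range of $\ell$ is empty and there is nothing to prove, so the substantive case is $\vv^k>\vv^N$, i.e. $k\le N-2$, which we assume henceforth; then $\|\vv^k-\vv^N\|_1>0$.

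First I would define the witness as $a=\argmax_{\pi^k}\cc^{\pi^N}$. Since $\vv^k\ge\vv^N$, Lemma~\ref{L-G2} (with $\pi''=\pi^k$, $\pi=\pi^N$) gives $(\cc^{\pi^N})_a\ge\frac{1-\gamma}{n}\|\vv^k-\vv^N\|_1>0$. Because $\pi^N$ is optimal, Lemma~\ref{L-opt-2} yields $(\cc^{\pi^N})_{A^2}\le\zz$, so an action of strictly positive modified cost cannot belong to player~$2$; hence $a\in\sigma^k$. This sign argument, which pins the witness to player~$1$, is the first point that genuinely uses the game structure together with optimality of the reference.

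Next I would rule out $a$ reappearing too late. Suppose $a\in\sigma^\ell\subseteq\pi^\ell$ for some $\ell$ with $k<\ell\le N$. Since the value vectors are non-increasing, $\vv^k\ge\vv^\ell\ge\vv^N$, and as $a=\argmax_{\pi^k}\cc^{\pi^N}$ with $a\in\pi^\ell$, Lemma~\ref{L-G3} (with $\pi''=\pi^k$, $\pi'=\pi^\ell$, $\pi=\pi^N$) gives the lower bound $\|\vv^\ell-\vv^N\|_1\ge\frac{1-\gamma}{n}\|\vv^k-\vv^N\|_1$. For the opposing upper bound, note that the tail of the run is determined by $\sigma^k$ alone (the value $\vv^k$ depends only on $\sigma^k$), so restarting strategy iteration at $\sigma^k$ reproduces $\sigma^k,\dots,\sigma^N$; applying Lemma~\ref{L-PI-2} to this restart gives $\|\vv^\ell-\vv^N\|_\infty\le\gamma^{\ell-k}\|\vv^k-\vv^N\|_\infty$, and converting between $\|\cdot\|_1$ and $\|\cdot\|_\infty$ (which differ by at most a factor $n$) yields $\|\vv^\ell-\vv^N\|_1\le n\gamma^{\ell-k}\|\vv^k-\vv^N\|_1$.

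Finally I would combine the two estimates. Dividing both by $\|\vv^k-\vv^N\|_1>0$ gives $\frac{1-\gamma}{n}\le n\gamma^{\ell-k}$, that is $\gamma^{\ell-k}\ge\frac{1-\gamma}{n^2}=\gamma^{L}$, whence $\ell-k\le L$. Thus $a\in\sigma^\ell$ forces $\ell\le k+L$; contrapositively, $a$ appears in no $\sigma^\ell$ with $k+L<\ell\le N$, which is exactly the claim. I expect the main obstacle to be conceptual rather than computational: recognizing that a \emph{single} reference profile (the optimal $\pi^N$) must be used so the witness does not change with $\ell$, and then balancing the geometric upper bound on the shrinking gap (via the value-iteration comparison of Lemma~\ref{L-PI-2}) against the linear lower bound of Lemma~\ref{L-G3} on how much one retained action contributes. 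The threshold $L=\log_{1/\gamma}\frac{n^2}{1-\gamma}$ is precisely the crossover of these two bounds.
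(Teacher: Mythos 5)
Your proof is correct and takes essentially the same route as the paper's: the same witness $a=\argmax_{\pi^k}\cc^{\pi^N}$ measured against the fixed terminal optimal profile, the lower bound via Lemmas~\ref{L-G2} and~\ref{L-G3}, the geometric upper bound via Lemma~\ref{L-PI-2} together with the $\|\cdot\|_1$/$\|\cdot\|_\infty$ conversion, and the same crossover at $L$. Your two refinements---forcing $a\in\sigma^k$ from strict positivity of $(\cc^{\pi^N})_a$ combined with $(\cc^{\pi^N})_{A^2}\le\zz$, and justifying the shifted application of Lemma~\ref{L-PI-2} by observing that the run restarted at $\sigma^k$ reproduces the tail---merely make explicit two steps the paper leaves implicit (``we may assume $a\in A^1$'' and the unexplained index shift in $\gamma^{\ell-k}$), as does your restriction to the case $\|\vv^k-\vv^N\|_1>0$, which the paper also needs when it cancels this quantity from both sides.
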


\begin{proof}
\abovedisplayskip = 5pt
\belowdisplayskip = 5pt
Let $(\pi^k)_{k=0}^N$, where $\pi^k=(\sigma^k,\tau^k)$, be the sequence of strategy profiles generated by the strategy iteration algorithm. By the correctness of the strategy iteration algorithm, $\pi^*=\pi^N$ is composed of optimal strategies for the two players.
% Let $\pi^*=\pi^N$ be a the optimal policy.
Let $a=\argmax_{\pi^k} \cc^{\pi^*}$.
% (The action achieving this maximum is not necessarily unique. We break ties arbitrarily.)
By Lemma~\ref{L-opt-2}, we have $(\cc^{\pi^*})_a\ge 0$ for every $a\in A^1$, and $(\cc^{\pi^*})_a\le 0$ for every $a\in A^2$. We may assume, therefore, that $a\in A^1$, i.e., that~$a$ is an action controlled by player~$1$.
%
% We claim that $a$ is an action controlled by player~$1$.
%
Suppose, for the sake of contradiction, that $a\in \pi^\ell$, for some $k+L<\ell\le N$. Using Lemma~\ref{L-G3}, with $\pi''=\pi^k$, $\pi'=\pi^\ell$ and $\pi=\pi^*$, we get that
\[\|\vv_{\pi^\ell}-\vv_{\pi^*}\|_1 \;\ge\; \frac{1-\gamma}{n}\, \| \vv_{\pi^k}-\vv_{\pi^*}\|_1.\]
On the other hand, using Lemma~\ref{L-PI-2}, we get that
\[\|\vv_{\pi^\ell}-\vv_{\pi^*}\|_\infty \;\le\; \gamma^{\ell-k} \| \vv_{\pi^k} - \vv_{\pi^*} \|_\infty.\]

\vspace*{-12pt}
Thus,
\[\| \vv_{\pi^\ell}-\vv_{\pi^*}\|_1 \;\le\; n\,\|\vv_{\pi^\ell}-\vv_{\pi^*}\|_\infty \;\le\;
n\gamma^{\ell-k}\, \|\vv_{\pi^k}-\vv_{\pi^*}\|_\infty \;\le\;
n\gamma^{\ell-k}\, \|\vv_{\pi^k}-\vv_{\pi^*}\|_1.\]
It follows that $n\gamma^{\ell-k}\ge \frac{1-\gamma}{n}$ and hence
\[\gamma^L \;>\; \gamma^{\ell-k} \;\ge\; \frac{1-\gamma}{n^2},\]
a contradiction.
\end{proof}

\begin{theorem} The \StrategyIteration\ algorithm, starting from any initial strategy, terminates with an optimal strategy after at most $(m+1)(1+\log_{\,1/\gamma}\frac{n^2}{1-\gamma}) = O(\frac{m}{1-\gamma}\log\frac{n}{1-\gamma})$ iterations.
\end{theorem}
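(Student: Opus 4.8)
The plan is to derive the theorem from Lemma~\ref{L-main-2} by a simple pigeonhole (counting) argument; all the analytic work has already been carried out in establishing that lemma, so nothing here needs the geometric-convergence or flux machinery again.

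First I would fix the notation: let $(\sigma^k)_{k=0}^N$ be the player-$1$ strategies produced by the run, and for each index $k$ with $0\le k\le N-1$ let $a_k\in A^1$ be an action guaranteed by Lemma~\ref{L-main-2}, i.e.\ $a_k\in\sigma^k$ while $a_k\notin\sigma^\ell$ for every $\ell$ with $k+L<\ell\le N$, where $L=\log_{1/\gamma}\frac{n^2}{1-\gamma}$. The key observation I would isolate is a ``window'' claim: if $k<k'$ and $a_k=a_{k'}$, then $k'-k\le L$. Indeed, since $a_{k'}=a_k$ lies in $\sigma^{k'}$ by the membership part of Lemma~\ref{L-main-2}, while the absence part applied to $a_k$ forbids $a_k$ from appearing in any $\sigma^\ell$ with $k+L<\ell\le N$, we cannot have $k'>k+L$; otherwise the choice $\ell=k'$ would give the contradiction $a_k\notin\sigma^{k'}$. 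Consequently, for each fixed action $a\in A^1$ the set of indices $k$ with $a_k=a$ lies in an interval of length $L$, and therefore contains at most $L+1$ elements.

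From here the count is immediate. There are $|A^1|\le m$ possible values for the $a_k$, each used for at most $L+1$ indices, so the number of indices is at most $m(L+1)\le(m+1)(1+L)$, bounding the number of iterations $N$ by $(m+1)\bigl(1+\log_{1/\gamma}\frac{n^2}{1-\gamma}\bigr)$. To obtain the stated $O$-form I would invoke the elementary inequality $\ln(1/\gamma)\ge 1-\gamma$, which gives $L=\frac{\ln(n^2/(1-\gamma))}{\ln(1/\gamma)}\le\frac{2\ln n+\ln\frac{1}{1-\gamma}}{1-\gamma}=O\bigl(\frac{1}{1-\gamma}\log\frac{n}{1-\gamma}\bigr)$, whence $(m+1)(1+L)=O\bigl(\frac{m}{1-\gamma}\log\frac{n}{1-\gamma}\bigr)$.

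I expect essentially no obstacle in this final step; the real content sits in Lemma~\ref{L-main-2}, which is where the interplay between the geometric contraction of value iteration (Lemma~\ref{L-PI-2}) and the flux lower bound (Lemma~\ref{L-G3}) is exploited. The only points to handle carefully are the off-by-one bookkeeping (the strategies $\sigma^0,\dots,\sigma^{N-1}$ are distinct, whereas $\sigma^N=\sigma^{N-1}$, so one should count indices in $\{0,\dots,N-1\}$) and checking that the window claim survives the degenerate case in which the interval $(k+L,N]$ is empty, where $a_k$ may be taken to be any action of $\sigma^k$ and the argument holds vacuously.
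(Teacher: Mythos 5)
Your proposal is correct and takes essentially the same route as the paper: both derive the theorem from Lemma~\ref{L-main-2} by a pigeonhole count, the paper by considering the spaced subsequence $\sigma^0,\sigma^{\bar{L}},\sigma^{2\bar{L}},\ldots$ with $\bar{L}=\lfloor 1+L\rfloor$, each member of which contains a distinct action never used again, and you by assigning to each index $k$ its retired action $a_k$ and observing that repeats of a fixed action are confined to a window of length $L$. Your variant is sound (it even yields the marginally sharper count $m(1+L)\le(m+1)(1+L)$), and your passage to the $O$-form via $\ln(1/\gamma)\ge 1-\gamma$ matches the paper's closing remark $\log_{1/\gamma}x\le\frac{x}{1-\gamma}$.
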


\begin{proof} Let $\bar{L}=\lfloor 1+\log_{\,1/\gamma}\frac{n^2}{1-\gamma}\rfloor$. Consider strategies $\sigma^0,\sigma^{\bar{L}},\sigma^{2\bar{L}},\ldots$. By Lemma~\ref{L-main-2}, every strategy in this subsequence contains a new action that would never be used again. As there are only $m$ actions, the total number of strategies in the sequence is at most $(m+1)\bar{L}=(m+1)(1+\log_{\,1/\gamma}\frac{n^2}{1-\gamma})$.
Finally, note that $\log_{\,1/\gamma} x = \frac{\log x}{\log\,1/\gamma} \le \frac{x}{1-\gamma}$.
\end{proof}

% \vspace*{-15pt}
\section{Concluding remarks}\label{S-concl}

We have shown that the strategy iteration algorithm is \emph{strongly polynomial} for 2TBSGs with a
\emph{fixed} discount factor. Friedmann \cite{Friedmann09}, on the other hand, has recently shown that the strategy iteration algorithm is \emph{exponential} for non-discounted 2TBSG, or when the discount factor is part of the input.

The existence of polynomial time algorithms for 2TBSGs when the discount factor is part of the input, or for the non-discounted case, remains an intriguing and a challenging open problem, with many possible consequences for complexity theory and automatic verification. As shown by Andersson and Miltersen \cite{AnMi09}, this is equivalent to finding a polynomial time algorithm for Condon's \cite{Condon92} \emph{Simple Stochastic Games} (SSGs).
Such an algorithm will immediately provide polynomial time algorithms for \emph{Mean Payoff Games} (MPGs) (see \cite{EhMy79},\cite{GuKaKh88},\cite{ZwPa96}) and \emph{Parity Games} (PGs) (see, e.g., \cite{EmJu91}, \cite{VoJu00}, \cite{JuPaZw08}).

We believe that our results give some hope of obtaining a polynomial time algorithm for
% 2TBSGs, when the discount factor is part of the input.
this problem.
In an earlier work, Ye \cite{Ye05} gave a polynomial time
algorithm for the analogous MDP problem. His algorithm uses interior point
methods and its analysis relies again on the LP formulation of the MDP problem.
Given the ``deLPfication'' of Ye's \cite{Ye10} analysis of the policy iteration algorithm
carried out here, one could speculate that looking at interior point methods for the two-player case,
with Ye's \cite{Ye05} algorithm for MDPs as a starting point, would be a
fertile approach.

% Non-discounted case?

% \nocite{LiDeKa95}
% \nocite{MaSi99}
% \nocite{MeCo94}
% \nocite{PaTs87}
% \nocite{Tseng90}
% \nocite{BjVo05,BjVo07}
% \nocite{EhMy79}
% \nocite{GuKaKh88}
% \nocite{Halman07}
% \nocite{Ludwig95}
% \nocite{ZwPa96}

% \nocite{Bellman57}
% \nocite{Bertsekas01}
% \nocite{Derman72}
% \nocite{Howard60}

% \nocite{dEpenoux63}
% \nocite{Fearnley10}
% \nocite{Ye05}
% \nocite{Ye10}

% \nocite{AnMi09}

% \nocite{Condon92,Condon93}

% \nocite{Friedmann09}
% \nocite{HoKa66}

% \nocite{Shapley53}

% \bigskip
% {\bf Note:} Many of the references below are not cited yet.

\bibliographystyle{plain}
% \bibliography{ref}
\bibliography{book,mdp,inf-games,graph-short,linear-prog,matrix,peter}

\end{document}